\documentclass[ejsv2,noshowframe]{imsart}

\RequirePackage[authoryear]{natbib}
\RequirePackage{mathtools}
\RequirePackage{graphicx}
\RequirePackage[colorlinks=true,citecolor=blue,linkcolor=blue,urlcolor=blue]{hyperref}

\startlocaldefs

\theoremstyle{plain}
\newtheorem{theorem}{Theorem}[section]

\newtheorem{lemma}[theorem]{Lemma}
\newtheorem{corollary}[theorem]{Corollary}

\theoremstyle{definition}
\newtheorem{remark}[theorem]{Remark}

\newcommand{\C}{\mathbb C}
\newcommand{\E}{\mathbb E}

\newcommand{\tr}{\operatorname{tr}}
\newcommand{\Var}{\operatorname{Var}}
\newcommand{\TV}{\operatorname{TV}}
\newcommand{\norm}[1]{\left\|#1\right\|}
\newcommand{\abs}[1]{\left|#1\right|}

\endlocaldefs

\begin{document}

\begin{frontmatter}

\title{Local and Global Risk Bounds for Quantum Entropy Estimation
under Projective-Design Measurements}
\runtitle{Risk bounds for quantum entropy estimation}

\begin{aug}
\author[A]{\fnms{Xinyu}~\snm{Song}
  \ead[label=e1]{song.xinyu@mail.shufe.edu.cn}}
\address[A]{School of Statistics and Data Science,
Shanghai University of Finance and Economics
\printead[presep={,\ }]{e1}}
\runauthor{X. Song}
\end{aug}

\begin{abstract}
We establish a lower bound for estimating the von Neumann entropy from
independent outcomes of any fixed rank-one POVM\@. A rotation-averaged van Trees
argument gives a global minimax risk of at least
\((d/n)\log^2\{n/(4d)\}\) when \(d\ge C\) and \(n\ge Cd\), without a projective-design
assumption. We also characterize risk on an operator-norm ball of radius
\(r\) around the maximally mixed state. We allow an approximate second moment:
on the trace-zero
Hermitian subspace, the measurement frame may differ by \(\varepsilon<1\)
from the tight projective frame. A clipped estimator based on canonical dual
shadows and the complete U-statistic for purity has risk at most
\(d^3r^2/n+d^4/n^2+d^6r^6\). Lower bounds under the same frame control yield
the local minimax rate \(d^3r^2/n+d^4/n^2\) when \(n\ge Cd^2\) and \(r\) lies
in an explicit matching range. For every fixed upper bound on
\(\varepsilon\), approximation changes
only the constants, not the powers of \(d,n,r\). At the critical radius
\(r=n^{-1/2}\), the local risk is
asymptotically negligible relative to the global risk when
\(n\log^2(n/d)\gg d^3\). This separation holds for projective 2-designs,
including global Clifford measurements in qubit dimensions, and for their
uniformly well-conditioned frame approximations.
Finite-sample experiments in dimension four illustrate the critical-radius
benchmark and the effect of a nonexact frame.
\end{abstract}

\begin{keyword}[class=MSC]
\kwdgroup[type=primary]{\kwd{62C20}}
\kwdgroup[type=secondary]{\kwd{62G05}\kwd{81P50}}
\end{keyword}

\begin{keyword}
\kwd{von Neumann entropy}
\kwd{local minimax risk}
\kwd{projective design}
\kwd{randomized measurement}
\kwd{U-statistic}
\kwd{van Trees inequality}
\end{keyword}

\end{frontmatter}

\section{Introduction}

Let \(\rho\) be the density matrix of a quantum system on \(\C^d\). Its von
Neumann entropy is
\[
 S(\rho)=-\tr(\rho\log\rho),
\]
where \(0\log 0=0\). The statistical experiment is determined by the manner in
which repeated preparations of the state are measured. Collective procedures
may act jointly on several copies and include methods based on Schur--Weyl
duality and empirical Young diagrams
\citep{ODonnellWright2016,AISW2020,WangZhang2025}. By comparison,
randomized-measurement and classical-shadow protocols measure one copy at a
time and retain a classical record \citep{HKP2020,ElbenReview2023}. A fixed
single-copy measurement generally contains less information than a collective
measurement. Risk bounds obtained under one experiment therefore need not
hold under the other. Statistical formulations of quantum measurement and
state estimation are reviewed in \citet{WangSong2020}; lower bounds for
restricted measurements are studied by
\citet{ChenHuangLiLiu2022,LoweNayak2025}.

Entropy estimation is a nonregular functional problem even in classical
models. For a multinomial distribution with a large alphabet, small cell
probabilities govern the minimax risk, and polynomial-approximation estimators
can substantially improve on the empirical plug-in estimator
\citep{JiaoVenkatHan2015,WuYang2016}. The projective-design experiment studied
here is different: its observations are measurement outcomes, not draws from
the eigenvalue distribution of \(\rho\). Both the eigenvalues and the
eigenvectors enter the observation law through the measurement channel.
Consequently, the classical entropy rates do not determine the risk under a
fixed randomized measurement.

Quantum tomography provides another natural comparison. A tomographic
procedure reconstructs the full matrix and then applies the entropy
functional, whereas the local estimator below targets only the quadratic
component that is statistically visible at \(I/d\). Under the same fixed
rank-one measurement, the empirical mean of calibrated dual shadows is a
linear-inversion state estimate; it may be projected onto the state space and
used as an entropy plug-in rule. Such a rule must reconstruct matrix
directions not targeted by the local quadratic functional, and the
nonsmooth entropy map can amplify reconstruction error near the boundary.
Our results neither give a tomography rate nor claim uniform domination of
tomographic plug-in estimators. The numerical comparison below instead
evaluates direct functional estimation and a dual-shadow tomography baseline
under an identical sample budget and observation model. General minimax
tomography under sparse Pauli measurements is studied by
\citet{CaiEtAl2016}, while frame-based connections between tomography and
shadow estimation are developed by \citet{InnocentiEtAl2023}.

The behavior of entropy also changes across the quantum state space. At the
maximally mixed state \(I/d\), its first derivative vanishes in every
trace-zero direction. On an operator-norm ball of radius \(r\) around \(I/d\),
the leading variation is therefore quadratic and is determined by the purity
\(\tr(\rho^2)\). Near the boundary of the state space, the derivative instead
increases as an eigenvalue approaches zero. These observations lead to the two
questions considered in this paper: how does the minimax risk depend on the
radius of a neighborhood of \(I/d\), and how does this local risk compare with
the risk over the full state space?

The center \(I/d\) is not chosen as a surrogate for an unknown state. It is
the unique unitarily invariant state and the stationary point of entropy under
the trace constraint. The class center and radius are known parts of the
statistical problem, while the state inside the ball remains unknown. Varying
\(r\) measures how the problem changes as one moves away from this
second-order point: at \(r\asymp n^{-1/2}\), the entropy diameter is already
attained by the constant rule \(\log d\), whereas on larger balls data can
improve the risk. The global class serves a different purpose. It includes
states near the boundary, where entropy is nonsmooth, and asks whether the
regular local benchmark accounts for the worst-case difficulty. A comparison
of the two risks is therefore a statement about spatial inhomogeneity of the
same experiment, not a device for replacing the global problem by a known
local model.

Complete pair averages are commonly used to estimate purity and related
multi-copy observables from randomized measurements
\citep{VanEnkBeenakker2012,HKP2020,VermerschEtAl2024}. Recent work has also
considered measurement design for nonlinear properties such as
\(\tr(O\rho^2)\) \citep{DuEtAl2026}, collision-based single-copy measurements
for higher trace powers \citep{LiEtAl2026}, and joint-copy procedures for
polynomial spectral functionals \citep{ChenWangYuZhang2026,YaoEtAl2026}.
Analytic continuation of randomized-measurement estimates of R\'enyi entropies
provides another route to the von Neumann entropy \citep{VijayEtAl2026}.
These methods use different observation models or address algorithmic recovery
rather than minimax risk over radius-indexed classes. Local asymptotic minimax
theory for quantum state estimation has also been developed under collective
or adaptive measurements \citep{GutaJanssens2008}, but those results concern a
different loss and a different statistical experiment.

Two related manuscripts by the author supply ingredients but not the present
theorems. \citet{SongTracePolynomial2026} develops Hoeffding decompositions
and compares batched with complete U-statistics for trace polynomials of a
fixed projected block under global Clifford shadows. Its entropy application
is an oracle small-spectrum functional; it does not estimate the full von
Neumann entropy, define a local parameter class, or prove minimax lower
bounds. The present paper imports complete symmetrization at degree two, but
adds the radius-dependent entropy reduction, matching local lower bounds, and
the exact-to-approximate frame analysis.

\citet{SongEntropyLowerBound2026} proves
\(R_{n,d}^{\mathrm{glob}}\gtrsim\log^2(d^2/n)\) for
\(d\lesssim n\lesssim d^2\) under an exact projective 2-design. Its proof
uses the exact design affinity to compare the maximally mixed state with a
Haar mixture of low-rank states. That identity is unavailable for a general
fixed POVM and cannot yield the global theorem below. Here the global lower
bound applies to every fixed rank-one POVM, uses a deterministic hard
orientation selected by rotation-averaged Fisher information, and holds for
all \(d\ge C\) and \(n\ge Cd\), overlapping the earlier window and extending beyond
\(d^2\). The distinction is the arbitrary fixed rank-one POVM and the
rotation-averaged van Trees interface, not a uniformly stronger lower bound.
The local \(d^4/n^2\) lower bound also uses a
different pair of full-rank, infinitesimally separated spectral mixtures.
These differences prevent either earlier theorem from implying the present
results.

The principal result of this paper is a global lower bound that applies to an
arbitrary fixed rank-one POVM\@. For each such measurement, we construct a
spectral submodel that transfers probability mass between two blocks of
comparable rank. A direct Fisher-information calculation for a fixed spectral
orientation depends on the particular measurement through inverse outcome
probabilities. We instead average the orientation of the submodel over the
unitary group. The average Fisher information coincides with its Haar value,
which can be evaluated through a Beta distribution, and therefore one
deterministic orientation has no larger prior-averaged information. A van
Trees argument then gives
\[
 R_{n,d}^{\mathrm{glob}}(\nu)
 \gtrsim
 \frac d n\log^2\left(\frac n{4d}\right),
 \qquad d\ge C,\quad n\ge Cd.
\]
The measurement remains fixed throughout this argument; only the hard
submodel is oriented relative to it. Thus the result holds for every fixed
rank-one POVM satisfying the usual normalization, without a design condition.

We use a local analysis to determine whether the global difficulty is visible
in a regular neighborhood of the maximally mixed state. Let
\(\Theta_d^{\mathrm{loc}}(r)\) be the operator-norm ball of radius \(r\) centered
at \(I/d\). The first entropy derivative vanishes on trace-zero perturbations
at this state, so the leading term is quadratic and can be estimated through
purity. We require only that the second-moment measurement frame be within a
fixed relative error \(\varepsilon<1\) of a tight projective frame. Substituting
the complete U-statistic formed from calibrated dual shadows into the
quadratic expansion gives
\[
 \sup_{\rho\in\Theta_d^{\mathrm{loc}}(r)}
 \E_\rho\{\widehat S_{\mathrm{loc}}-S(\rho)\}^2
 \lesssim
 \frac{d^3r^2}{n}+\frac{d^4}{n^2}+d^6r^6.
\]
The estimator uses only the observed outcomes and the known measurement
frame; it requires no population spectral projector, eigengap, or pilot
sample. A one-dimensional spectral submodel and a two-prior mixture
construction give lower bounds for the first two terms under the same frame
control. It follows that
\[
 R_{n,d}^{\mathrm{loc}}(r;\nu)
 \asymp
 \frac{d^3r^2}{n}+\frac{d^4}{n^2}
\]
when \(n\ge Cd^2\) and
\[
 n^{-1/2}\le r\le
 c\max\{(dn)^{-1/3},(d^3n)^{-1/4}\}.
\]
At the critical radius \(r=n^{-1/2}\), the constant estimator \(\log d\)
attains the same order. On larger balls within the matching range, the
complete-U estimator has asymptotically smaller risk than this constant-rule
benchmark whenever \(r\sqrt n\to\infty\). The result applies to the Haar
covariant measurement in every dimension, to the global Clifford measurement
in qubit dimensions, and to their uniformly conditioned frame
approximations. For fixed \(\varepsilon\), approximation changes constants but
not the minimax rate.

For every measurement covered by the local frame condition, the global lower
bound and the local upper bound concern the same experiment. If
\(n\log^2(n/d)\gg d^3\), the minimax risk over an
\(n^{-1/2}\) neighborhood of \(I/d\) is negligible relative to the global
risk. This comparison is a separation result, not a characterization of the
global minimax rate. The tight local bound supplies the benchmark needed for
this comparison and shows that behavior near \(I/d\) does not account for the
global difficulty witnessed by the small-eigenvalue submodel.

The rest of the paper proceeds as follows.
Section~\ref{sec:model} introduces the observation model and the local and
global risks. Section~\ref{sec:local} develops the local benchmark near the
maximally mixed state. Section~\ref{sec:numerics} gives a finite-sample
illustration. Section~\ref{sec:global} establishes the global lower bound and
the local--global separation. Section~\ref{sec:conclusion} concludes the
paper.
Proofs are collected in the appendices.

\section{Observation model and risks}
\label{sec:model}

Let \(\Theta_d\) denote the set of density matrices on \(\C^d\).
For \(z\) in a measurable space \(\mathcal Z\), let \(P_z\) be a rank-one
projector and let \(\nu\) be a probability measure on \(\mathcal Z\)
satisfying
\[
 \int P_z\,\nu(dz)=\frac Id.
\]
Then \(M(dz)=dP_z\,\nu(dz)\) is a rank-one POVM\@. Conversely, every rank-one
POVM admits such a representation after normalizing the trace measure.
The measure \(\nu\) is called an exact complex projective \(t\)-design if, for
\(1\le k\le t\),
\begin{equation}
 \int P_z^{\otimes k}\,\nu(dz)
 =
 \frac{1}{d(d+1)\cdots(d+k-1)}
 \sum_{\pi\in\mathfrak S_k}W_\pi,
\label{eq:design}
\end{equation}
where \(W_\pi\) denotes the operator that permutes the \(k\) tensor factors
according to \(\pi\). If the system is in state \(\rho\), one observation
\(Z\) has density
\begin{equation}
 p_\rho(z)=d\,\tr(\rho P_z)
\label{eq:density}
\end{equation}
with respect to \(\nu\). We will also allow a controlled departure from the
exact second moment. Let \(\mathbb H_{d,0}\) be the real Hilbert space of
traceless Hermitian matrices, equipped with
\(\langle A,B\rangle_F=\tr(AB)\) and
\(\norm{A}_F=\{\tr(A^2)\}^{1/2}\), and
define the measurement-frame operator
\begin{equation}
 \mathcal F_\nu(A)
 =d\int\tr(AP_z)P_z\,\nu(dz).
\label{eq:frame-operator}
\end{equation}
The first-moment normalization implies that \(\mathcal F_\nu(I)=I\) and that
\(\mathbb H_{d,0}\) is invariant under \(\mathcal F_\nu\). Write
\(\mathcal F_{\nu,0}\) for the restriction to this subspace. We say that
\(\nu\) is a \emph{frame-\(\varepsilon\) projective 2-design} if
\begin{equation}
 \norm{(d+1)\mathcal F_{\nu,0}-I_{\mathbb H_{d,0}}}_{2\to2}
 \le\varepsilon,
 \qquad 0\le\varepsilon<1,
\label{eq:approx-frame}
\end{equation}
where \(2\to2\) denotes the superoperator norm induced by the Frobenius
norm. This is the second-moment metric needed by the statistical argument.
The case \(\varepsilon=0\) is equivalent to an exact projective 2-design.
For \(\varepsilon<1\), the frame operator is invertible and the canonical
dual shadow
\begin{equation}
 X_\nu=\mathcal F_\nu^{-1}(P_Z)
 =\frac Id+\mathcal F_{\nu,0}^{-1}\left(P_Z-\frac Id\right)
\label{eq:dual-shadow}
\end{equation}
satisfies \(\E_\rho X_\nu=\rho\) and \(\tr(X_\nu)=1\). Under an exact design,
\eqref{eq:dual-shadow} reduces to \((d+1)P_Z-I\).
Dual-frame reconstructions for general informationally complete measurements
are discussed by \citet{InnocentiEtAl2023} and \citet{FischerEtAl2024}.
Throughout, \(Z_1,\ldots,Z_n\) are independent observations with joint law
\((p_\rho\nu)^{\otimes n}\), and \(\E_\rho\) denotes expectation under this
product experiment.

Two examples will be used below. Haar measure on pure states is an exact
projective design of every order. If \(d=2^q\), the uniform measure on
stabilizer states is an exact projective 3-design, and hence a 2-design
\citep{Webb2016,Zhu2017}. The latter POVM experiment is equivalent to the
global Clifford protocol. To see this, let \(U\) be uniform on the Clifford
group modulo phases and, conditional on \(U\), record the computational-basis
outcome \(b\) after applying \(U\) to the state. Relative to the uniform
measure on \((U,b)\), the joint density is
\[
 d\,\tr(\rho P_{U,b}),
 \qquad P_{U,b}=U^*|b\rangle\langle b|U.
\]
The Clifford group acts transitively on the stabilizer states, so the map
\((U,b)\mapsto P_{U,b}\) has fibers of equal cardinality. It follows that
\(P_{U,b}\) is uniformly distributed on the stabilizer states. Conditional on
this projector, the record \((U,b)\) is uniform on the corresponding fiber and
does not depend on \(\rho\). Hence \(P_{U,b}\) is sufficient, and the two
experiments are statistically equivalent.

Condition~\eqref{eq:approx-frame} also covers nonexact measurements. For
example, let \(\nu_{\rm H}\) be the Haar projective measure and let
\(\nu_{\rm B}\) be uniform on the projectors of one orthonormal basis. For
\(0\le\delta<1\), the mixture
\begin{equation}
 \nu_\delta=(1-\delta)\nu_{\rm H}+\delta\nu_{\rm B}
\label{eq:haar-basis-mixture}
\end{equation}
has the exact first moment and is frame-\(\varepsilon\) with
\(\varepsilon=d\delta\), provided \(d\delta<1\). Indeed, on
\(\mathbb H_{d,0}\), its frame has eigenvalue
\((1-\delta)/(d+1)\) on the off-diagonal subspace and
\((1+d\delta)/(d+1)\) on the diagonal subspace. Thus the class is not merely
a restatement of exact designs. More generally, sampling every outcome of a
random orthonormal basis preserves the first moment exactly, while the
second-frame error quantifies how far the resulting measurement is from a
tight projective frame.

For \(r>0\), define the local class
\begin{equation}
 \Theta_d^{\mathrm{loc}}(r)
 =
 \left\{
 \rho\in\Theta_d:
 \norm{\rho-I/d}_{\mathrm{op}}\le r
 \right\}.
\label{eq:local-class}
\end{equation}
Let \(\mathcal E_n\) be the set of measurable real-valued functions of
\(Z_1,\ldots,Z_n\). The local and global minimax risks are
\begin{align}
 R_{n,d}^{\mathrm{loc}}(r;\nu)
 &=
 \inf_{\widehat S\in\mathcal E_n}
 \sup_{\rho\in\Theta_d^{\mathrm{loc}}(r)}
 \E_\rho\{\widehat S-S(\rho)\}^2,
\label{eq:local-risk}\\
 R_{n,d}^{\mathrm{glob}}(\nu)
 &=
 \inf_{\widehat S\in\mathcal E_n}
 \sup_{\rho\in\Theta_d}
 \E_\rho\{\widehat S-S(\rho)\}^2.
\label{eq:global-risk}
\end{align}
The two risks concern the same fixed single-copy experiment and differ only
in their parameter spaces.

\section{A local benchmark near the maximally mixed state}
\label{sec:local}

Let \(X_i=\mathcal F_\nu^{-1}(P_{Z_i})\), \(1\le i\le n\), be the canonical
dual shadows in \eqref{eq:dual-shadow}. We estimate the purity by the complete
degree-two U-statistic
\begin{equation}
 \widehat T_2^U
 =
 \binom n2^{-1}
 \sum_{1\le i<j\le n}\tr(X_iX_j).
\label{eq:purity-u}
\end{equation}
This statistic is unbiased for \(\tr(\rho^2)\). Let
\(\Pi_{[0,\log d]}(x)=\min\{\log d,\max(0,x)\}\), and define
\begin{equation}
 \widehat S_{\mathrm{loc}}
 =
 \Pi_{[0,\log d]}
 \left[
 \log d-\frac d2
 \left\{\widehat T_2^U-\frac1d\right\}
 \right].
\label{eq:local-estimator}
\end{equation}
The expression inside the projection is the quadratic Taylor approximation to
entropy at \(I/d\), with purity replaced by \eqref{eq:purity-u}. The following
theorem gives its uniform risk and a corresponding minimax lower bound.

\begin{theorem}[Radius-dependent local bounds]
\label{thm:local}
Fix \(0\le\varepsilon_0<1\). There are constants \(c>0\) and
\(C_{\varepsilon_0}>0\), where \(c\) is universal and
\(C_{\varepsilon_0}\) depends only on \(\varepsilon_0\), such that the
following statements hold.

\noindent{\rm (i)}
If \(\nu\) is a frame-\(\varepsilon\) projective 2-design with
\(\varepsilon\le\varepsilon_0\), then, for every \(d\ge2\), \(n\ge2\), and
\(0<r\le1/(4d)\),
\[
 \sup_{\rho\in\Theta_d^{\mathrm{loc}}(r)}
 \E_\rho\{\widehat S_{\mathrm{loc}}-S(\rho)\}^2
 \le C_{\varepsilon_0}\left(
 \frac{d^3r^2}{n}+\frac{d^4}{n^2}+d^6r^6
 \right).
\]

\noindent{\rm (ii)}
If \(\nu\) is a frame-\(\varepsilon\) projective 2-design with
\(\varepsilon\le\varepsilon_0\), then, for every \(d\ge2\), \(n\ge2\), and
\(n^{-1/2}\le r\le1/(4d)\),
\[
 R_{n,d}^{\mathrm{loc}}(r;\nu)
 \ge c\left(
 \frac{d^3r^2}{n}+\frac{d^4}{n^2}
 \right).
\]
\end{theorem}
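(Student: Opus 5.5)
For part (i) the plan is a bias--variance split around the quadratic Taylor expansion. Write \(\rho=I/d+\Delta\) with \(\tr\Delta=0\) and \(\norm{\Delta}_{\mathrm{op}}\le r\le1/(4d)\), so every eigenvalue \(\lambda_k=1/d+\delta_k\) satisfies \(\abs{d\delta_k}\le 1/4\). Expanding \(-\lambda\log\lambda\) around \(1/d\) and using \(\sum_k\delta_k=0\) gives
\[
 S(\rho)=\log d-\frac d2\sum_k\delta_k^2+R(\rho),\qquad
 \abs{R(\rho)}\lesssim d^2\sum_k\abs{\delta_k}^3\le d^3r^3 .
\]
Since \(\sum_k\delta_k^2=\tr(\rho^2)-1/d\), the unprojected estimator \(\widetilde S=\log d-(d/2)(\widehat T_2^U-1/d)\) satisfies \(\widetilde S-S(\rho)=-(d/2)(\widehat T_2^U-\tr\rho^2)-R(\rho)\). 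The true entropy lies in \([0,\log d]\) and \(\Pi_{[0,\log d]}\) is \(1\)-Lipschitz, so clipping can only reduce the error. This yields
\[
 \E_\rho\{\widehat S_{\mathrm{loc}}-S(\rho)\}^2\le \frac{d^2}{2}\Var_\rho(\widehat T_2^U)+2d^6r^6 .
\]
The variance comes from the Hoeffding decomposition of the degree-two U-statistic:
\[
 \Var(\widehat T_2^U)\le \frac4n\Var\tr(X_1\rho)+\frac{2}{n(n-1)}\E\tr(X_1X_2)^2 .
\]

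The main obstacle is bounding these two moments uniformly in the frame error. Write \(X=I/d+\mathcal F_{\nu,0}^{-1}(P_Z-I/d)\), and note that \(\norm{\mathcal F_{\nu,0}^{-1}}_{2\to2}\le (d+1)/(1-\varepsilon_0)\).

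For the linear term, \(\tr(X\rho)=1/d+\tr\{(P_Z-I/d)\mathcal F_{\nu,0}^{-1}\Delta\}\) by self-adjointness of \(\mathcal F_{\nu,0}\). Set \(B=\mathcal F_{\nu,0}^{-1}\Delta\). Then
\[
 \Var\tr(X\rho)\le d\int \tr(BP_z)^2\tr(\rho P_z)\,\nu(dz)\le(1+1/4)\,\tr\{B\,\mathcal F_\nu(B)\}.
\]
Here I use \(d\tr(\rho P_z)\le 1+dr\le 5/4\) on the local ball. This is at most a constant times \(\norm{B}_F\norm{\Delta}_F\lesssim C_{\varepsilon_0}\, d\norm{\Delta}_F^2\le C_{\varepsilon_0}\,d^2r^2\). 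Multiplying by \(d^2/n\) gives the \(d^3r^2/n\) term. The approach I expect to work is to write everything in terms of the frame operator and apply the bound on its inverse only once.

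For the quadratic term, condition on \(Z_2\): \(\E[\tr(X_1X_2)^2\mid Z_2]\) is the same kind of quadratic form, now with \(B=\mathcal F_{\nu,0}^{-1}(P_{Z_2}-I/d)\), whose Frobenius norm is \(\lesssim d/(1-\varepsilon_0)\). This gives \(\E\tr(X_1X_2)^2\lesssim C_{\varepsilon_0}d^2\), since the traceless parts contribute \(\tr\{B\,\mathcal F_\nu(B)\}\asymp d\) and the cross terms are \(O(1)\). The resulting contribution is \(d^2\cdot d^2/n^2=d^4/n^2\).

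For part (ii), I prove the two lower bounds separately and add them, since \(\max\ge\) half the sum.

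For \(d^3r^2/n\) I use a one-dimensional submodel \(\rho_t=I/d+t\,r_0H\), where \(H=\mathrm{diag}(1,-1,1,-1,\dots)\) (with one zero entry if \(d\) is odd) and \(t\in[1/2,1]\), \(r_0\asymp r\). Along this path \(\partial_tS\asymp d\cdot r_0^2\cdot d\, t=d^2r^2t\). Its Fisher information is
\[
 n\int \frac{\{d\tr(HP_z)r_0\}^2}{d\tr(\rho_tP_z)}\,\nu(dz)\lesssim n\,d\,r^2\tr\{H\mathcal F_\nu(H)\}\lesssim n\,d\,r^2,
\]
using \(\tr\{H\mathcal F_\nu(H)\}\le(1+\varepsilon_0)d/(d+1)\). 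Applying van Trees with a smooth prior on \([1/2,1]\) gives a lower bound of order \((d^2r^2)^2/(n\,d\,r^2+1)\gtrsim d^3r^2/n\); the requirement \(r\ge n^{-1/2}\) keeps the prior-information term subdominant.

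For \(d^4/n^2\) I use a two-point mixture (Le Cam) test: the point \(I/d\) against a mixture of \(I/d+\eta\,UHU^*\) with \(U\) Haar. These have an entropy gap of about \(d^2\eta^2\). The \(\chi^2\) divergence of the \(n\)-fold mixture is \(\E\exp\{n\,d\eta^2\langle H_U,\mathcal F_\nu H_{U'}\rangle\}-1\), and taking \(\eta^2\asymp 1/n\) keeps it bounded. This is where the Haar average of \(\langle H_U,\mathcal F_\nu H_{U'}\rangle\) must be shown to concentrate at scale \(1/d\) using only the frame bound, which I regard as the second delicate step. The choice \(\eta\asymp n^{-1/2}\le r\) is admissible precisely because \(r\ge n^{-1/2}\), and it yields risk \(\gtrsim(d^2/n)^2=d^4/n^2\).
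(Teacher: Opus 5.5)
Your architecture matches the paper's. Part (i) is a bias--variance split around the quadratic Taylor expansion using the complete U-statistic; your raw-kernel form of the Hoeffding bound is valid. Part (ii) uses van Trees along a fixed spectral direction for \(d^3r^2/n\) and a Haar-rotated mixture for \(d^4/n^2\). Testing \(I/d\) against a single mixture is a legitimate simplification of the paper's two nondegenerate mixtures: it is the case \(a_0=0\), with \(q_0\equiv1\).

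The genuine gap is the step you flag as delicate and then leave open. With \(H_U=UHU^*\) and \(K(U,U')=d\langle H_U,\mathcal F_{\nu,0}(H_{U'})\rangle\), you must show \(\E e^{\lambda K}\le1+C\lambda^2\) for small \(\lambda\), uniformly in \(d\) and \(\varepsilon<1\). This is exactly the new ingredient an approximate frame forces. For an exact design, \(K=\frac d{d+1}\tr(H_UH_{U'})\) depends only on \(U^*U'\), but an approximate frame is not unitarily covariant, so that reduction is unavailable. The paper closes the gap as follows.
\begin{itemize}
\item Condition on \(U'\) and set \(M=d\mathcal F_{\nu,0}(H_{U'})\). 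The map \(U\mapsto\tr\{UHU^*M\}\) is centered, because \(\E_U UHU^*=(\tr H/d)I=0\).
\item The same map is Lipschitz in Frobenius distance with constant \(2\norm{H}_{\mathrm{op}}\norm{M}_F\le2(1+\varepsilon)\sqrt d\). This uses only the upper frame bound.
\item Concentration on \(\mathsf U(d)\), in the form \eqref{eq:unitary-moments}, then makes it sub-Gaussian with an \(O(1)\) parameter.
\item Averaging over \(U'\) gives \(\E\abs{K}^k\le(C\sqrt k)^k\) and \(\E K=0\).
\end{itemize}
With \(\eta^2=c/n\), your \(\chi^2\) bound \(\E e^{cK}-1\) is then \(O(c^2)\), which is small once \(c\) is a small universal constant.

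Part (i) also needs repair, because the chain as written does not give the stated orders.
\begin{itemize}
\item \textbf{Linear term.} The frame identity is \(\int\tr(BP_z)^2\,\nu(dz)=d^{-1}\tr\{B\mathcal F_\nu(B)\}\). Without the \(d^{-1}\) you get \(\Var\tr(X\rho)\lesssim d^2r^2\), and multiplying by \(d^2/n\) gives \(d^4r^2/n\), not \(d^3r^2/n\). With the factor restored, \(\Var\tr(X\rho)\le\frac5{4d}\tr\{\Delta\,\mathcal F_{\nu,0}^{-1}(\Delta)\}\lesssim\norm{\Delta}_F^2/(1-\varepsilon)\le dr^2/(1-\varepsilon)\). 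This is the dimension-free bound of Lemma~\ref{lem:design-identities}.
\item \textbf{Quadratic term.} The analogue of \(B\) is \(\mathcal F_{\nu,0}^{-1}\) applied to the traceless part \(Y_2=\mathcal F_{\nu,0}^{-1}(P_{Z_2}-I/d)\) of \(X_2\). That is \(\mathcal F_{\nu,0}^{-2}(P_{Z_2}-I/d)\), not \(Y_2\) itself. Then \(\tr\{B\mathcal F_\nu(B)\}\) is of order \(d^3\), not \(d\), and the factor \(d^{-1}\) gives \(\E\tr(X_1X_2)^2\lesssim d^2\).
\item The ingredients you list would give only \(O(1)\) or \(O(d)\). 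That is too small: already for Haar at \(I/d\), \(\Var\tr(X_1X_2)=(d+1)^3(d-1)/d^2\).
\end{itemize}
The clean fix is the paper's. Use \(\E_\rho\tr(OX)^2\le\kappa_{\varepsilon_0}\norm{O}_F^2\) for traceless \(O\), applied once with \(O=\Delta\). Apply it again, conditionally on \(Z_2\), with \(O=Y_2\), where \(\norm{Y_2}_F\le(d+1)/(1-\varepsilon)\). Equivalently, \(\zeta_2=\tr(\mathcal C_\rho^2)\le\kappa_{\varepsilon_0}^2(d^2-1)\).
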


On a suitable range of radii, the Taylor remainder in part (i) is no larger
than the two stochastic terms. The upper and lower bounds then have the same
order.

\begin{corollary}[Local minimax rate]
\label{cor:local-rate}
For every fixed \(0\le\varepsilon_0<1\), there are positive constants
\(c_0,c,C\), depending at most on \(\varepsilon_0\), such that, if \(\nu\)
is a frame-\(\varepsilon\) projective 2-design with
\(\varepsilon\le\varepsilon_0\), \(d\ge2\), \(n\ge Cd^2\), and
\[
 n^{-1/2}\le r\le
 c_0\max\{(dn)^{-1/3},(d^3n)^{-1/4}\},
\]
then
\begin{equation}
 R_{n,d}^{\mathrm{loc}}(r;\nu)
 \asymp_{\varepsilon_0}
 \frac{d^3r^2}{n}+\frac{d^4}{n^2}.
\label{eq:local-rate}
\end{equation}
\end{corollary}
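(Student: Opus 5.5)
The corollary follows from Theorem~\ref{thm:local} once two things are checked on the stated range: that the hypotheses of both parts hold, and that the Taylor remainder \(d^6r^6\) in part (i) is at most a constant multiple of the stochastic terms \(d^3r^2/n+d^4/n^2\). The plan is to take \(c_0\) and \(C\) small and large enough, respectively, for both checks to go through. The lower bound is then part (ii) with the universal constant \(c\). The upper bound is part (i) with constant \(C_{\varepsilon_0}(1+K)\), where \(K\) bounds the ratio of the remainder to the stochastic terms.

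\textbf{Admissibility of the radius.} Parts (i) and (ii) require \(n^{-1/2}\le r\le 1/(4d)\). The lower constraint is assumed directly, so only the upper constraint needs checking.
\begin{itemize}
\item Since \(d\ge2\) and \(n\ge Cd^2\), we have \((dn)^{-1/3}\le C^{-1/3}d^{-1}\) and \((d^3n)^{-1/4}\le C^{-1/4}d^{-5/4}\le C^{-1/4}d^{-1}\).
\item Hence \(r\le c_0\max\{C^{-1/3},C^{-1/4}\}d^{-1}\), which is at most \(1/(4d)\) provided \(c_0\le 1/4\) and \(C\ge1\).
\item The interval is nonempty when \(n^{-1/2}\le c_0(dn)^{-1/3}\), that is, when \(n\ge c_0^{-6}d^2\). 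This can be absorbed into the choice \(C\ge c_0^{-6}\). Even without it, the statement is vacuous for an empty range.
\end{itemize}

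\textbf{Controlling the remainder.} The bound on \(r\) is a maximum of two terms, so split into the two cases of which term attains it.
\begin{itemize}
\item If \(r\le c_0(dn)^{-1/3}\), then \(d^4r^4\le c_0^4 d^{4/3}n^{-4/3}\). We want \(d^6r^6\le d^3r^2/n\), which is equivalent to \(d^3r^4\le 1/n\), i.e.\ \(r\le (d^3n)^{-1/4}\). This does not follow from the first case alone. Instead, compare with the other stochastic term: \(d^6r^6\le c_0^6 d^6 (dn)^{-2}=c_0^6 d^4/n^2\). This bound suffices.
\item If \(r\le c_0(d^3n)^{-1/4}\), then \(d^3r^4\le c_0^4/n\). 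Consequently \(d^6r^6=d^3r^2\cdot d^3r^4\le c_0^4\,d^3r^2/n\).
\end{itemize}
In both cases \(d^6r^6\le \max(c_0^4,c_0^6)\bigl(d^3r^2/n+d^4/n^2\bigr)\). With \(c_0\le1\) this gives \(K=1\).

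\textbf{Combining.} Part (i) then gives the upper bound \(2C_{\varepsilon_0}(d^3r^2/n+d^4/n^2)\), and part (ii) gives the matching lower bound with constant \(c\). Together these yield \eqref{eq:local-rate}, with constants depending only on \(\varepsilon_0\). There is no substantial obstacle. The only care needed is to use the maximum correctly: each branch of the maximum controls the remainder by a different stochastic term, rather than both controlling the same term. One must also confirm that \(n\ge Cd^2\) forces the range into \(r\le1/(4d)\).
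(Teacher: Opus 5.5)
Your proof is correct and follows the paper's argument essentially step for step. Both check that \(n\ge Cd^2\) places the whole range of radii inside \([n^{-1/2},1/(4d)]\), split on which branch of the maximum bounds \(r\) to obtain \(d^6r^6\le c_0^6d^4/n^2\) or \(d^6r^6\le c_0^4d^3r^2/n\), and then apply both parts of Theorem~\ref{thm:local}; the only slip is in the aside you discard in the first case, where the bound should read \(d^4r^4\le c_0^4d^{8/3}n^{-4/3}\), not \(d^{4/3}\), and this does not affect the argument.
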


\begin{proof}
Choose \(c_0\) sufficiently small and \(C\) sufficiently large. Since
\(n\ge Cd^2\), the interval of radii is nonempty and its upper endpoint is at
most \(1/(4d)\). If the maximum in the upper endpoint is \((dn)^{-1/3}\),
then
\[
 d^6r^6\le c_0^6\frac{d^4}{n^2}.
\]
If the maximum is \((d^3n)^{-1/4}\), then
\[
 d^6r^6\le c_0^4\frac{d^3r^2}{n}.
\]
Thus the Taylor remainder is bounded by a universal constant times the sum of
the two terms in \eqref{eq:local-rate}.
The conclusion follows by applying the two parts of
Theorem~\ref{thm:local} to the same frame-approximate experiment.
\end{proof}

The upper endpoint in Corollary~\ref{cor:local-rate} is \((dn)^{-1/3}\) when
\(n\le d^5\) and \((d^3n)^{-1/4}\) when \(n\ge d^5\). The first branch makes
the interval nonempty once \(n\) is of order \(d^2\); the second allows a
larger radius when the sample size exceeds order \(d^5\). The two terms in
\eqref{eq:local-rate} are equal at \(r=\sqrt{d/n}\). Hence
\(d^3r^2/n\) is the leading term within the matching range only when
\(n\gtrsim d^5\) and \(r\) lies above this crossover. Below the crossover,
\(d^4/n^2\) is the larger term. This comparison concerns the decomposition of
the minimax rate and does not, by itself, establish optimality of a constant
estimator.

Exact first-moment normalization together with the two-sided frame bound
\eqref{eq:approx-frame} suffices for both bounds. On the local class, the
one-observation density relative to the design measure is uniformly bounded
between \(3/4\) and \(5/4\). Combining this likelihood-ratio bound with the
inverse-frame bound gives a dimension-free covariance bound for a single dual
shadow on the trace-zero Hermitian subspace, while complete symmetrization
assigns an \(n^{-2}\) factor to the degenerate second Hoeffding component. For
the lower bounds, a fixed spectral direction produces \(d^3r^2/n\), and two
fixed spectra with Haar-distributed eigenbases produce \(d^4/n^2\). The
second construction must account for the lack of unitary covariance of an
approximate frame; a two-rotation kernel and unitary concentration provide a
dimension-free replacement. The required frame identities are stated in
Appendix~\ref{app:design}; the upper and lower bounds are proved in
Appendices~\ref{app:local-upper} and~\ref{app:local-lower}, respectively.

\begin{remark}[What is robust to design approximation]
\label{rem:approx-scope}
The extension is deliberately stated in the induced frame norm used by the
proof. The entropy Taylor expansion, the complete-U Hoeffding decomposition,
and the global lower bound are unchanged. Three exact-design interfaces are
modified: the nominal shadow is replaced by the calibrated dual shadow; the
local covariance and fixed-direction Fisher information use the upper and
lower frame eigenvalues; and the rotated-mixture affinity uses two independent
rotations rather than a kernel depending only on their relative rotation. No
third or higher design moment is needed.

Approximate quantum-state and unitary designs are formulated through
different operational comparison criteria in the literature
\citep{AmbainisEmerson2007,DankertEtAl2009}. Such a certificate implies
\eqref{eq:approx-frame} only after an explicit, possibly
dimension-dependent, transfer to the induced projective measurement frame.
Theorem~\ref{thm:local} does
not cover an approximate first moment, an unknown frame that must itself be
estimated, or a sequence with \(\varepsilon\uparrow1\). In the last case the
dual-frame constants diverge; if the frame loses a traceless direction, that
direction is not identifiable. Thus a fixed frame error below one affects
only constants, whereas deteriorating conditioning can affect the effective
rate or invalidate estimation altogether.
\end{remark}

\begin{remark}[The critical radius]
When \(dr\le1/4\), the entropy diameter of
\(\Theta_d^{\mathrm{loc}}(r)\) is of order \(d^2r^2\). Hence the constant
estimator \(\widehat S=\log d\) has worst-case squared error of order
\(d^4r^4\). The upper bound follows from the same Taylor expansion used in
the proof of Theorem~\ref{thm:local}(i), and perturbations with eigenvalues
\(+r\) and \(-r\) on two blocks of comparable rank give the reverse bound.
At \(r=n^{-1/2}\), this benchmark has order \(d^4/n^2\), which agrees with the
endpoint of \eqref{eq:local-rate}. Thus data are not needed to attain the
minimax order at the critical radius. If \(r\sqrt n\to\infty\) within the
range of Corollary~\ref{cor:local-rate}, then
\[
 \frac{d^3r^2/n+d^4/n^2}{d^4r^4}
 =\frac1{dnr^2}+\frac1{n^2r^4}\longrightarrow0.
\]
Therefore, on the larger local balls covered by the corollary, the complete-U
estimator has asymptotically smaller worst-case risk than the constant rule.
This improvement may occur when \(d^2\ll n\ll d^5\), even though
\(d^4/n^2\) remains the larger term in \eqref{eq:local-rate} throughout that
sample range.
\end{remark}

\begin{remark}
The estimator in \eqref{eq:local-estimator} does not use a population spectral
decomposition or class-level eigengap information. It does, however, exploit
the fact that the class in \eqref{eq:local-class} is centered at the known
state \(I/d\). Theorem~\ref{thm:local} does not extend directly to a
neighborhood of an unknown state, where the first derivative of entropy need
not vanish. For a nonexact frame, the estimator also requires calibration of
the known measurement operator \(\mathcal F_\nu\). Using the exact-design
formula \((d+1)P_Z-I\) without this correction would generally introduce
design-dependent bias.
\end{remark}

\section{Finite-sample illustration}
\label{sec:numerics}

We illustrate the two radius regimes and the effect of a nonexact frame in
dimension \(d=4\). Let \(U\) be one fixed Haar rotation and consider
\[
 \rho_n=\frac I4+r_nU\operatorname{diag}(1,1,-1,-1)U^*.
\]
The critical path uses \(r_n=n^{-1/2}\), while the supercritical diagnostic
uses \(r_n=n^{-2/5}\). We take \(n\in\{1024,2048,4096,8192\}\); every state
satisfies \(4r_n\le1/4\). The measurements are the Haar--basis mixtures in
\eqref{eq:haar-basis-mixture}, with \(\delta=\varepsilon/4\) and
\(\varepsilon\in\{0,0.5\}\). Thus \(\varepsilon=0\) is the exact Haar
measurement, whereas \(\varepsilon=0.5\) is a nonexact, well-conditioned
frame. Outcomes from both mixture components are sampled exactly, and all
estimators use the calibrated dual shadow in \eqref{eq:dual-shadow}.

Figure~\ref{fig:local-numerics} compares the complete-U estimator, the
constant rule \(\log d\), and a tomography plug-in baseline. The latter is the
empirical mean of the dual shadows, projected in Frobenius norm onto the
density-matrix simplex before applying entropy. Each point is based on 2,000
independent repetitions; the error bars are \(1.96\) Monte Carlo standard
errors of the estimated MSE. At the critical path, the endpoint log--log
slopes of the complete-U MSE are \(-1.971\) for \(\varepsilon=0\) and
\(-1.947\) for \(\varepsilon=0.5\), close to the \(n^{-2}\) scale, while the
constant rule has the smaller finite-sample constant. This agrees with the
critical-radius conclusion that data are not needed to attain the minimax
order.
The reported slopes concern the clipped estimator in
\eqref{eq:local-estimator}: its clipping rate ranges from \(31.25\%\) to
\(37.75\%\) on the critical grid, compared with \(1.30\%\) to \(6.85\%\) on
the supercritical grid. The tomography projection is inactive in all released
repetitions, but it is retained to define a valid state estimator outside this
grid.

On the supercritical path, the corresponding slopes are \(-1.810\) and
\(-1.812\), close to the \(n^{-9/5}\) first stochastic term obtained by
substituting \(r_n=n^{-2/5}\) into \(d^3r_n^2/n\). At \(n=8192\), the
complete-U, constant, and tomography MSEs are respectively
\(1.03\times10^{-5}\), \(3.52\times10^{-5}\), and
\(3.21\times10^{-5}\) for the exact frame; for \(\varepsilon=0.5\), they are
\(1.09\times10^{-5}\), \(3.52\times10^{-5}\), and
\(3.42\times10^{-5}\). The frame perturbation therefore produces a moderate
constant change on this grid without a visible change of scaling.

\begin{figure}[t]
\centering
\includegraphics[width=\linewidth]{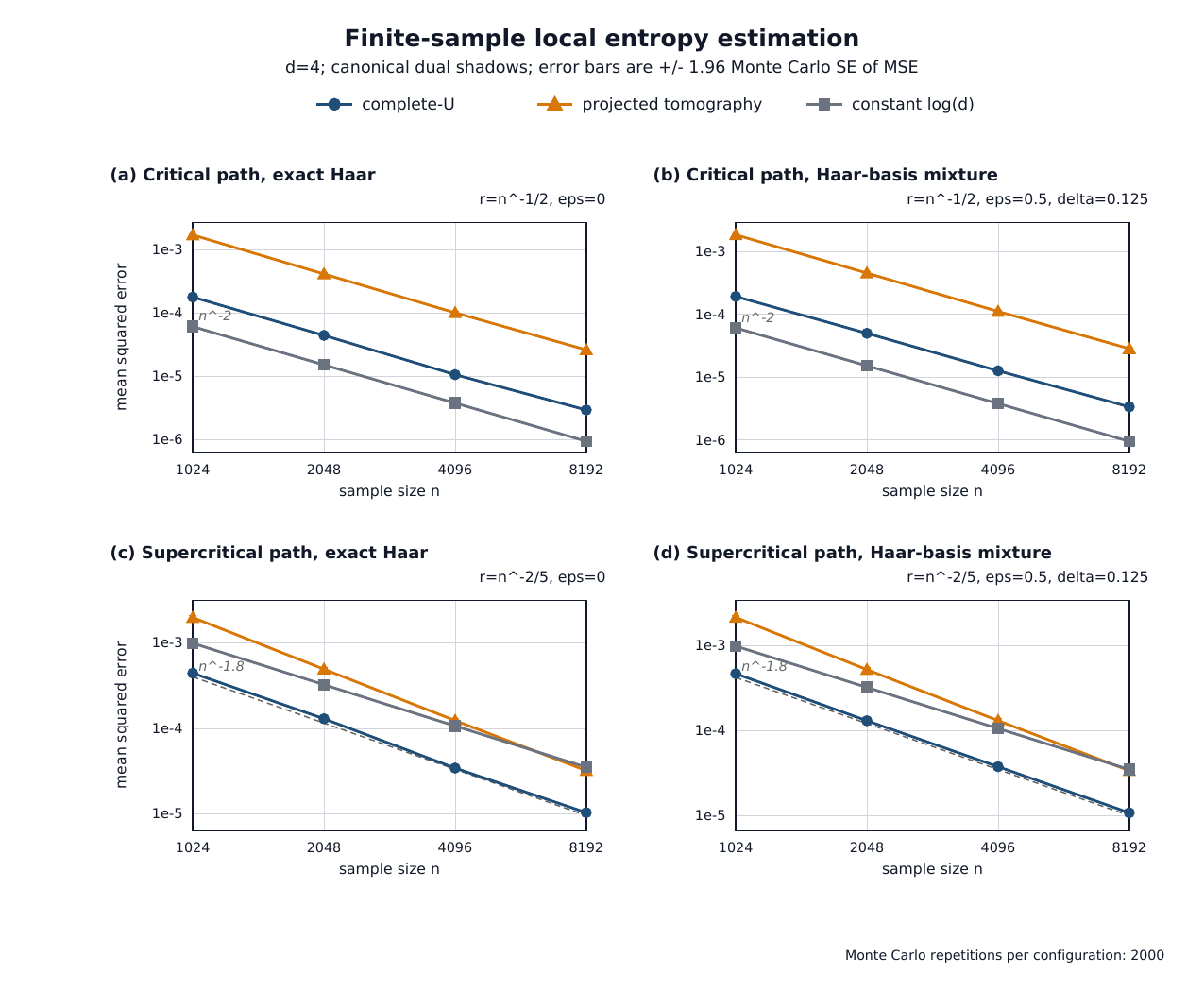}
\caption{Finite-sample MSE under exact Haar and nonexact Haar--basis
measurements. Panels (a)--(b) use \(r_n=n^{-1/2}\); panels (c)--(d) use
\(r_n=n^{-2/5}\). Error bars are \(1.96\) Monte Carlo standard errors based on
2,000 repetitions. Reference lines show the leading powers discussed in the
text.}
\label{fig:local-numerics}
\end{figure}

The experiment is an illustration, not a verification of minimax quantifiers.
It uses one orientation and one dimension, and the unknown constant in the
upper radius of Corollary~\ref{cor:local-rate} is not calibrated by the
simulation. The released code, configuration, raw replicate data, summaries,
and figure use fixed seeds and reproduce the reported values from the stated
measurement law.

\section{Global lower bound and local--global separation}
\label{sec:global}

We next consider the global risk in \eqref{eq:global-risk}. The measurement
\(\nu\) is an arbitrary fixed rank-one POVM satisfying the normalization in
Section~\ref{sec:model}. The lower bound is obtained from a spectral submodel
that transfers probability mass between two blocks of comparable rank. Its
orientation is chosen relative to the fixed measurement. Along this submodel,
the entropy derivative increases as the mass of the smaller block decreases,
whereas the prior-averaged Fisher information from one observation remains of
order \(d^{-1}\).

\begin{theorem}[Global rank-one-measurement lower bound]
\label{thm:global}
There are universal constants \(c,C>0\) such that, for every \(d\ge C\),
every \(n\ge Cd\), and every fixed rank-one POVM \(\nu\) on \(\C^d\),
\begin{equation}
 R_{n,d}^{\mathrm{glob}}(\nu)
 \ge
 c\,\frac d n\log^2\left(\frac n{4d}\right).
\label{eq:global-lower}
\end{equation}
\end{theorem}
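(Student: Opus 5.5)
We plan to use a one-parameter spectral submodel together with the van Trees inequality, and to control the Fisher information by averaging the orientation of the submodel over the unitary group. Split \(\{1,\dots,d\}\) into two blocks \(A,B\) with \(|A|=k\asymp d/2\) and \(|B|=d-k\). For \(\theta\in[\theta_0,2\theta_0]\) and a unitary \(V\), define
\[
 \rho_\theta^V=V\Big\{\frac{\theta}{k}\,I_A+\frac{1-\theta}{d-k}\,I_B\Big\}V^*.
\]
Its entropy is
\[
 S(\theta)=h(\theta)+\theta\log k+(1-\theta)\log(d-k),
\]
where \(h\) is the binary entropy. Hence \(S'(\theta)=\log\{(1-\theta)k/(\theta(d-k))\}\asymp\log(1/\theta)\) for small \(\theta\). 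We take \(\theta_0\asymp d/n\), so that \(S'\asymp\log\{n/(4d)\}\) on the support of the prior. The condition \(n\ge Cd\) guarantees that \(\theta_0\) is small and the logarithm is large.

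The key step is to bound the one-observation Fisher information by \(I_V(\theta)=\int (\partial_\theta p)^2/p\,d\nu\), where
\[
 p_\theta(z)=d\,\tr(\rho_\theta^V P_z)=d\Big\{\frac\theta k a_z+\frac{1-\theta}{d-k}(1-a_z)\Big\},\qquad a_z=\tr(V\Pi_AV^*P_z).
\]
The statistic \(a_z\) depends on the measurement through \(z\). Averaging over \(V\sim\) Haar, for every fixed rank-one \(P_z\) the variable \(a_z\) has the same \(\mathrm{Beta}(k,d-k)\) law. Tonelli and \(\nu(\mathcal Z)=1\) then give
\[
 \int I_V(\theta)\,dV=\E_{a\sim\mathrm{Beta}(k,d-k)}\frac{d\,(a/k-(1-a)/(d-k))^2}{\theta a/k+(1-\theta)(1-a)/(d-k)},
\]
independently of \(\nu\). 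We will bound this explicitly. Write the numerator as \(d(a-k/d)^2/(k(d-k)/d)^2\) times constants, and bound the denominator below by \(\theta a/k+c/d\), since \(1-a\) is bounded away from \(0\) with overwhelming probability when \(k\le d/2\). The Beta concentration, \(\Var(a)\asymp1/d\), then gives order \(d\cdot d^{-1}\cdot d\cdot\min\{d,k/\theta\}\), which we expect to work out to \(\lesssim 1/\theta\). This is the expected rate for a multinomial-like mass parameter, since information about \(\theta\) is roughly \(1/\theta\) per observation. That would make the estimator limited, which the \(1/(d\theta)\)-type scaling claimed in the paper suggests must be sharper. Indeed, the Beta variance is \(\asymp k(d-k)/d^3\), and this yields \(\int I_V\lesssim 1/(\theta d)+1\)-type behavior; carrying out this computation carefully is the main obstacle. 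Integrating in \(\theta\) against a smooth prior \(\pi\) on \([\theta_0,2\theta_0]\) gives an average of \(\int\pi(\theta)\int I_V(\theta)\,dV\,d\theta\) over \(V\). Some deterministic \(V^\star\) therefore has prior-averaged information at most this Haar value. We fix \(V^\star\) from now on, so that the measurement stays fixed and only the submodel is oriented.

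Apply the van Trees inequality to the functional \(S(\theta)\) along the \(V^\star\) submodel, with prior \(\pi(\theta)=\theta_0^{-1}\pi_1((\theta-\theta_0)/\theta_0)\). Then \(J(\pi)\asymp\theta_0^{-2}\), and
\[
 \sup_\theta\E(\widehat S-S)^2\ge\frac{\{\int S'\pi\}^2}{n\int I_{V^\star}\pi+J(\pi)}\gtrsim\frac{\log^2(n/(4d))}{n/(d\theta_0)+\theta_0^{-2}}.
\]
Choosing \(\theta_0\asymp d/n\) balances the terms whenever the information is \(\lesssim 1/(d\theta)\), producing \((d/n)\log^2\{n/(4d)\}\). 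The boundary behaviour \(\pi=0\) at the endpoints makes van Trees valid. If the Beta computation only gives the weaker information bound \(\lesssim d/\theta\), we will instead take a ratio of block sizes \(k\) scaling so the displayed constant \(4d\) emerges; the Haar–Beta evaluation is the crux. Finally, we check that \(\rho_\theta^{V^\star}\in\Theta_d\) and that the regularity conditions (differentiability in quadratic mean, with \(p_\theta>0\) a.e.) hold.
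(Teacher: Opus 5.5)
Your architecture matches the paper's: a two-block spectral submodel, Haar averaging of its orientation so that \(a_z\sim\operatorname{Beta}(k,d-k)\) for every outcome, Tonelli to obtain a \(\nu\)-free average, a deterministic \(V^\star\), and van Trees with a prior vanishing at the endpoints. The gap is in the quantitative core, which you leave unresolved and guess wrongly. The Haar-averaged information is not of order \(1/\theta\), \(1/(d\theta)\), or \(d/\theta\); it is \(O(1/d)\) uniformly in \(\theta\). The experiment never reveals block membership. A Haar-oriented rank-one outcome puts weight \(a\approx k/d\) on the small block, so \(p_\theta(z)\) stays near one however small \(\theta\) is. Concretely, take \(k=d/2=m\) and \(X=2a-1\). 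Your averaged integrand is then \(4X^2/\{1-(1-2\theta)X\}\). Symmetry of \(X\) turns its mean into \(4\E[X^2/\{1-(1-2\theta)^2X^2\}]\le4\E\{X^2/(1-X^2)\}=4/(d-2)\), using \(\E\{1/(a(1-a))\}=2(2m-1)/(m-1)\) for \(a\sim\operatorname{Beta}(m,m)\). This uniformity in \(\theta\) is what produces the \(\log^2\) factor. If you only had \(I\lesssim1/(d\theta)\), the van Trees bound would be at most of order \(\frac{d}{n}\theta_0\log^2(1/\theta_0)\lesssim d/n\) for every choice of \(\theta_0\).

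Your scale choice and the claimed balancing are also wrong. With \(\theta_0\asymp d/n\), the prior information alone is \(J(\pi)\asymp\theta_0^{-2}\asymp n^2/d^2\), so the van Trees bound can be no larger than order \((d/n)^2\log^2(n/d)\). In your own display, \(n/(d\theta_0)+\theta_0^{-2}\asymp n^2/d^2\), which yields \((d/n)^2\log^2\), not \((d/n)\log^2\). With the correct bound \(n\E_\pi I=O(n/d)\), you must balance \(n/d\) against \(\theta_0^{-2}\), i.e. take \(\theta_0\asymp\sqrt{d/n}\). The paper takes \(\mu=2\sqrt{d/n}\) and \(\theta=\mu+t\) with \(\abs{t}\le\mu/2\). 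Then \(S'\ge\frac12\log(1/\mu)=\frac14\log\{n/(4d)\}\) and the denominator is at most \((8+\pi^2)n/d\). Your fallback of changing the block sizes repairs neither issue.
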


The argument uses the one-dimensional van Trees inequality
\citep{GillLevit1995}. For even \(d\), let \(\Pi_s\) and \(\Pi_b\) project
onto two orthogonal subspaces of rank \(d/2\), and consider
\begin{equation}
 \rho_t=
 \frac{2(\mu+t)}d\Pi_s+
 \frac{2(1-\mu-t)}d\Pi_b,
 \qquad |t|\le\frac\mu2.
\label{eq:global-submodel}
\end{equation}
For odd \(d\), two rank-\((d-1)/2\) blocks are supplemented by a
one-dimensional block with eigenvalue \(1/d\). Taking
\(\mu=2\sqrt{d/n}\), the entropy derivative along this family is bounded
below by a constant multiple of \(\log(1/\mu)\).

For a Haar direction \(u\), its squared projection onto a fixed rank-\(d/2\)
subspace has a Beta distribution. This identity gives an order-\(d^{-1}\)
bound for the Haar Fisher information. Averaging
\eqref{eq:global-submodel} over its spectral orientation gives the same
prior-averaged information for any fixed rank-one POVM\@. One may therefore
select a deterministic orientation with Fisher information no larger than
the Haar average and then apply van Trees. The odd-dimensional construction
and the full proof are given in Appendix~\ref{app:global-proof}.

\begin{remark}[Role of rotation averaging]
Theorem~\ref{thm:global} is not obtained by comparing the fixed measurement
with the Haar experiment. For each \(\nu\), the argument selects a difficult
spectral orientation within the parameter space of the same experiment. The
Beta distribution enters only through the average over orientations; the
proof does not approximate an inverse measurement probability by a finite
design.
\end{remark}

If \(\nu\) obeys the approximate frame condition, the local upper bound and
the global lower bound concern the same observation model. Their comparison
gives the following separation result.

\begin{corollary}[Local--global separation]
\label{cor:separation}
Fix \(0\le\varepsilon_0<1\). There are constants \(c,C>0\), depending at
most on \(\varepsilon_0\), such that, if \(\nu\) is a
frame-\(\varepsilon\) projective 2-design with
\(\varepsilon\le\varepsilon_0\), \(d\ge C\), and \(n\ge Cd^2\), then
\[
 \frac{R_{n,d}^{\mathrm{glob}}(\nu)}
 {R_{n,d}^{\mathrm{loc}}(n^{-1/2};\nu)}
 \ge
 c\frac{n}{d^3}
 \log^2\left(\frac n{4d}\right).
\]
Hence, along any sequence satisfying
\[
 \frac{n}{d^3}\log^2\left(\frac n{4d}\right)\longrightarrow\infty,
\]
the local minimax risk is asymptotically negligible relative to the global
minimax risk.
\end{corollary}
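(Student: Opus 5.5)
The separation is a direct comparison of two results already established: Theorem~\ref{thm:global} bounds the numerator from below, and Theorem~\ref{thm:local}(i) bounds the denominator from above. The only task is to check that the hypotheses of both theorems hold simultaneously and to simplify the ratio.

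\emph{Numerator.} A frame-\(\varepsilon\) projective 2-design \(\nu\) is in particular a fixed rank-one POVM with the normalization of Section~\ref{sec:model}. Hence Theorem~\ref{thm:global} applies whenever \(d\ge C_1\) and \(n\ge C_1 d\). The latter follows from \(n\ge Cd^2\) once \(C\ge C_1\) and \(d\ge1\). Therefore
\[
 R_{n,d}^{\mathrm{glob}}(\nu)\ge c_1\frac dn\log^2\left(\frac n{4d}\right).
\]

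\emph{Denominator.} Take \(r=n^{-1/2}\). Theorem~\ref{thm:local}(i) requires \(0<r\le 1/(4d)\), which is equivalent to \(n\ge16d^2\); this holds if \(C\ge16\). Since the minimax risk is at most the worst-case risk of \(\widehat S_{\mathrm{loc}}\),
\[
 R_{n,d}^{\mathrm{loc}}(n^{-1/2};\nu)\le C_{\varepsilon_0}\left(\frac{d^3}{n^2}+\frac{d^4}{n^2}+\frac{d^6}{n^3}\right).
\]
Each term is at most a constant multiple of \(d^4/n^2\). Indeed, \(d^3\le d^4\), and \(d^6/n^3\le d^4/n^2\) because \(n\ge d^2\). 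Hence the denominator is at most \(3C_{\varepsilon_0}d^4/n^2\).

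\emph{Ratio.} Dividing the two bounds gives
\[
 \frac{R_{n,d}^{\mathrm{glob}}(\nu)}{R_{n,d}^{\mathrm{loc}}(n^{-1/2};\nu)}
 \ge\frac{c_1}{3C_{\varepsilon_0}}\,\frac{d/n}{d^4/n^2}\log^2\left(\frac n{4d}\right)
 =c\,\frac n{d^3}\log^2\left(\frac n{4d}\right),
\]
with \(c=c_1/(3C_{\varepsilon_0})\). The constant \(C=\max(C_1,16)\) is universal, and \(c\) depends only on \(\varepsilon_0\). The asymptotic statement follows immediately: if the right-hand side tends to infinity, the ratio of local to global risk tends to zero.

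No step is a real obstacle. The only care needed is in the domain checks, namely that \(n\ge Cd^2\) delivers both \(r=n^{-1/2}\le1/(4d)\) and \(n\ge C_1d\). One should also confirm that the upper bound in Theorem~\ref{thm:local}(i) is applied as a bound on the minimax risk, which holds because \(\widehat S_{\mathrm{loc}}\in\mathcal E_n\).
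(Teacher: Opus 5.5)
Your proposal is correct and follows the paper's own proof: you bound the numerator with Theorem~\ref{thm:global} and the denominator with Theorem~\ref{thm:local}(i) at \(r=n^{-1/2}\), and you reduce the latter to a constant times \(d^4/n^2\) using \(n\ge Cd^2\). You also carry out the domain checks that the paper leaves implicit, namely \(r\le 1/(4d)\iff n\ge16d^2\), \(n\ge C_1d\), and \(d^6/n^3\le d^4/n^2\).
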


\begin{proof}
Apply the lower bound in Theorem~\ref{thm:global} to the numerator and the
upper bound in Theorem~\ref{thm:local} with \(r=n^{-1/2}\) to the
denominator. For \(n\ge Cd^2\), the latter is at most a constant depending
only on \(\varepsilon_0\) times \(d^4/n^2\).
\end{proof}

Corollary~\ref{cor:separation} does not determine the global minimax rate. It
shows that, along the sequences specified in the corollary, the regular local
problem around \(I/d\) has a smaller risk scale than the full problem.
The corollary applies to the Haar covariant measurement in every dimension,
to the global Clifford measurement in qubit dimensions, and to uniformly
well-conditioned frame approximations of these experiments. In the submodel
\eqref{eq:global-submodel}, one half of the spectrum has total mass
\(\mu=2\sqrt{d/n}\). In the separation regime, the corresponding small
eigenvalues lie outside the \(n^{-1/2}\) operator-norm neighborhood of
\(I/d\).

\section{Conclusion}
\label{sec:conclusion}

Over the full state space, the rotation-averaged construction in this paper
gives a lower bound of order
\[
 \frac dn\log^2\left(\frac n{4d}\right)
\]
for every fixed rank-one POVM when \(d\ge C\) and \(n\ge Cd\). The proof selects a difficult spectral
orientation relative to the fixed measurement by averaging its
prior-weighted Fisher information over unitary rotations. It therefore does
not require the measurement to be a projective design.

The local analysis provides a tight benchmark for comparison with this global
bound. On the range in Corollary~\ref{cor:local-rate}, the minimax risk over an
operator-norm ball of radius \(r\) around \(I/d\) is
\(d^3r^2/n+d^4/n^2\). The upper bound is attained by a complete U-statistic for
purity formed from canonical dual shadows. Exact designs are not required:
for every fixed frame error below one, the upper and lower bounds retain the
same powers of \(d,n,r\), with only their constants changing. At the critical
radius \(n^{-1/2}\), the constant estimator has risk of order \(d^4/n^2\);
on larger balls with
\(r\sqrt n\to\infty\), the data-dependent rule improves on this benchmark.
The finite-sample experiment illustrates this transition and the moderate
constant effect of one nonexact frame; it is not evidence of uniformity over
states, dimensions, or approximate-design constructions.
For these exact and approximate projective frames, the local and global
results yield a separation between the regular behavior near \(I/d\) and the
global difficulty associated
with small eigenvalues. A matching global upper bound remains open. Its
construction would require an estimator that accommodates both the quadratic
behavior near \(I/d\) and the nonsmooth behavior near the boundary of the
state space.

\begin{appendix}

\section{Measurement-frame identities}
\label{app:design}

\begin{lemma}
\label{lem:design-identities}
Suppose that \(\nu\) has the first-moment normalization in
Section~\ref{sec:model}. The operator \(\mathcal F_\nu\) is self-adjoint and
positive semidefinite on the real Hilbert space of Hermitian matrices,
\(\mathcal F_\nu(I)=I\), and it preserves \(\mathbb H_{d,0}\). For all
Hermitian \(A,B\),
\[
 \int\tr(AP_z)\tr(BP_z)\,\nu(dz)
 =\frac1d\tr\{A\mathcal F_\nu(B)\}.
\]
Consequently, the \(L^2(\nu)\) inner product of two one-observation densities
satisfies
\begin{equation}
 \int p_\rho(z)p_{\rho'}(z)\,\nu(dz)
 =d\tr\{\rho\mathcal F_\nu(\rho')\}.
\label{eq:density-overlap}
\end{equation}
If \(\nu\) is a frame-\(\varepsilon\) projective 2-design, then
\(\mathcal F_\nu\) is invertible, the dual shadow in
\eqref{eq:dual-shadow} is unbiased and has trace one, and, for traceless
Hermitian \(A,B\),
\begin{equation}
 \int p_{I/d+aA}(z)p_{I/d+bB}(z)\,\nu(dz)
 =1+dab\,\tr\{A\mathcal F_{\nu,0}(B)\}.
\label{eq:centered-density-overlap}
\end{equation}
If, in addition, \(d\ge2\), \(\rho\in\Theta_d^{\mathrm{loc}}(r)\),
\(dr\le1/4\), and \(O\in\mathbb H_{d,0}\), then
\begin{equation}
 \E_\rho\{\tr(OX_\nu)^2\}
 \le
 \frac{5}{4d}\tr\{O\mathcal F_{\nu,0}^{-1}(O)\}
 \le\frac54\frac{d+1}{d(1-\varepsilon)}\tr(O^2).
\label{eq:local-shadow-moment}
\end{equation}
\end{lemma}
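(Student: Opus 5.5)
The plan is to verify the structural properties of \(\mathcal F_\nu\) directly from the definition \eqref{eq:frame-operator}, and then obtain the three displayed identities and the moment bound \eqref{eq:local-shadow-moment} by substitution.

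\emph{Structure of \(\mathcal F_\nu\) and the density identities.} For Hermitian \(A,B\), the definition gives
\[
 \tr\{A\mathcal F_\nu(B)\}=d\int\tr(AP_z)\tr(BP_z)\,\nu(dz).
\]
This expression is real and symmetric in \(A,B\), so \(\mathcal F_\nu\) is self-adjoint. Taking \(A=B\) shows it is nonnegative, so \(\mathcal F_\nu\) is positive semidefinite. Dividing the same display by \(d\) gives the stated integral identity. Since \(\tr(P_z)=1\), we get \(\mathcal F_\nu(I)=d\int P_z\,\nu(dz)=I\). Self-adjointness then yields \(\tr\{\mathcal F_\nu(A)\}=\tr\{\mathcal F_\nu(I)A\}=\tr(A)\), so \(\mathbb H_{d,0}\) is preserved. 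Identity \eqref{eq:density-overlap} follows by inserting \(p_\rho=d\tr(\rho P_z)\) into the integral identity. For \eqref{eq:centered-density-overlap}, expand \(\tr\{(I/d+aA)\mathcal F_\nu(I/d+bB)\}\) using \(\mathcal F_\nu(I)=I\), tracelessness of \(A,B\), and invariance of \(\mathbb H_{d,0}\). The cross terms vanish, the constant term is \(1/d\), and multiplying by \(d\) gives the claim.

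\emph{Invertibility and the dual shadow.} Condition \eqref{eq:approx-frame} places the spectrum of the self-adjoint operator \((d+1)\mathcal F_{\nu,0}\) in \([1-\varepsilon,1+\varepsilon]\), with \(1-\varepsilon>0\). Together with \(\mathcal F_\nu(I)=I\) on the orthogonal complement, this makes \(\mathcal F_\nu\) invertible, and it gives \(\norm{\mathcal F_{\nu,0}^{-1}}_{2\to2}\le(d+1)/(1-\varepsilon)\). The trace of the dual shadow is one because the correction term in \eqref{eq:dual-shadow} lies in \(\mathbb H_{d,0}\). For unbiasedness, compute
\[
 \E_\rho P_Z=d\int\tr(\rho P_z)P_z\,\nu(dz)=\mathcal F_\nu(\rho),
\]
so that \(\E_\rho X_\nu=\mathcal F_\nu^{-1}\mathcal F_\nu(\rho)=\rho\). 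Interchanging the integral with the linear map \(\mathcal F_\nu^{-1}\) is harmless in finite dimensions.

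\emph{Moment bound \eqref{eq:local-shadow-moment}.} Since \(O\) is traceless,
\[
 \tr(OX_\nu)=\tr\{O\mathcal F_{\nu,0}^{-1}(P_Z-I/d)\}=\tr(\tilde O P_Z),
 \qquad \tilde O=\mathcal F_{\nu,0}^{-1}(O)\in\mathbb H_{d,0},
\]
where the first equality uses self-adjointness of \(\mathcal F_{\nu,0}^{-1}\). For \(\rho\in\Theta_d^{\mathrm{loc}}(r)\) with \(dr\le1/4\), we have \(p_\rho(z)=1+d\tr\{(\rho-I/d)P_z\}\le1+dr\le5/4\). This uses that \(P_z\) is a rank-one projector, so \(\abs{\tr(\Delta P_z)}\le\norm{\Delta}_{\mathrm{op}}\). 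Hence
\[
 \E_\rho\tr(\tilde OP_Z)^2\le\frac54\int\tr(\tilde OP_z)^2\,\nu(dz)=\frac5{4d}\tr\{\tilde O\mathcal F_\nu(\tilde O)\}=\frac5{4d}\tr\{O\mathcal F_{\nu,0}^{-1}(O)\},
\]
which is the first inequality. The second follows from the operator-norm bound on \(\mathcal F_{\nu,0}^{-1}\).

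The only mildly delicate step is the likelihood-ratio bound via the operator norm; everything else is bookkeeping with self-adjointness.
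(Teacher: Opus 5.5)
Your proposal is correct and follows the paper's proof essentially step for step. The shared route is: the integral identity read off from the definition gives self-adjointness, positivity and both overlap formulas; the frame condition places the spectrum of \((d+1)\mathcal F_{\nu,0}\) in \([1-\varepsilon,1+\varepsilon]\); and the moment bound combines \(p_\rho\le 5/4\) with self-adjointness of \(\mathcal F_{\nu,0}^{-1}\) and the first identity. You are only slightly more explicit than the paper, for instance in deriving trace preservation from \(\tr\{\mathcal F_\nu(I)A\}\) and in rewriting \(\tr(OX_\nu)=\tr(\tilde O P_Z)\).
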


\begin{proof}
The definition \eqref{eq:frame-operator} gives self-adjointness,
nonnegativity, and the first display. The first-moment normalization gives
\(\mathcal F_\nu(I)=I\). Since
\(\tr\{\mathcal F_\nu(A)\}=\tr(A)\), the trace-zero subspace is invariant.
Equation~\eqref{eq:density-overlap} follows from \eqref{eq:density}.

Condition~\eqref{eq:approx-frame} places the eigenvalues of
\(\mathcal F_{\nu,0}\) in
\([(1-\varepsilon)/(d+1),(1+\varepsilon)/(d+1)]\), proving invertibility.
Moreover,
\[
 \E_\rho X_\nu
 =\mathcal F_\nu^{-1}\{\mathcal F_\nu(\rho)\}=\rho.
\]
The block decomposition
\(\operatorname{span}\{I\}\oplus\mathbb H_{d,0}\) gives
\eqref{eq:dual-shadow}, and hence \(\tr(X_\nu)=1\).
Expanding \eqref{eq:density-overlap} around \(I/d\) gives
\eqref{eq:centered-density-overlap}.

For the last assertion, write \(H=\rho-I/d\). By \eqref{eq:density} and
\(\abs{\tr(HP_z)}\le\norm{H}_{\mathrm{op}}\),
\[
 \frac34\le p_\rho(z)=1+d\tr(HP_z)\le\frac54.
\]
Self-adjointness of \(\mathcal F_\nu^{-1}\) and the first identity give
\begin{align*}
 \E_\rho\{\tr(OX_\nu)^2\}
 &=\int p_\rho(z)
 \tr\{\mathcal F_{\nu,0}^{-1}(O)P_z\}^2\,\nu(dz)\\
 &\le\frac{5}{4d}
 \tr\{O\mathcal F_{\nu,0}^{-1}(O)\}\\
 &\le\frac54\frac{d+1}{d(1-\varepsilon)}\tr(O^2),
\end{align*}
which proves \eqref{eq:local-shadow-moment}.
\end{proof}

\section{Proof of the local upper bound}
\label{app:local-upper}

\begin{proof}[Proof of Theorem~\ref{thm:local}(i)]
Write \(H=\rho-I/d\) and \(E_i=X_i-\rho\). The Hoeffding decomposition of
\(h(X_1,X_2)=\tr(X_1X_2)\) gives
\begin{equation}
 \Var(\widehat T_2^U)
 =
 \frac{4\zeta_1}{n}
 +\frac{2\zeta_2}{n(n-1)},
\label{eq:hoeffding}
\end{equation}
where
\[
 \zeta_1=\E_\rho\{\tr(H E_1)\}^2,
 \qquad
 \zeta_2=\E_\rho\{\tr(E_1E_2)\}^2.
\]
This is the standard degree-two U-statistic variance decomposition
\citep{Hoeffding1948,Lee1990}. Here
\(\tr(E_1E_2)\) is the canonical second-order projection, so the coefficient
\(4/n\) in \eqref{eq:hoeffding} is exact. If \(\zeta_2\) were instead defined
as the variance of the raw kernel, the equivalent coefficient of \(\zeta_1\)
would be \(4(n-2)/\{n(n-1)\}\).

For traceless Hermitian \(O\), Lemma~\ref{lem:design-identities} implies
\[
 \Var_\rho\{\tr(OX_\nu)\}
 \le \E_\rho\{\tr(OX_\nu)^2\}
 \le\kappa_{\varepsilon_0}\tr(O^2),
 \qquad
 \kappa_{\varepsilon_0}=\frac{2}{1-\varepsilon_0}.
\]
It follows that
\[
 \zeta_1\le\kappa_{\varepsilon_0}\tr(H^2)
 \le\kappa_{\varepsilon_0}dr^2.
\]

Let \(\mathcal C_\rho\) be the covariance operator of \(E\) on the real
Hilbert space of traceless Hermitian matrices. The preceding bound gives
\(\norm{\mathcal C_\rho}_{\mathrm{op}}\le\kappa_{\varepsilon_0}\).
Independence and an
orthonormal expansion on this \((d^2-1)\)-dimensional space give
\[
 \zeta_2=\tr(\mathcal C_\rho^2)
 \le\kappa_{\varepsilon_0}^2(d^2-1).
\]
Substitution in \eqref{eq:hoeffding} yields
\begin{equation}
 \Var(\widehat T_2^U)
 \le
 \frac{4\kappa_{\varepsilon_0}dr^2}{n}
 +\frac{2\kappa_{\varepsilon_0}^2(d^2-1)}{n(n-1)}.
\label{eq:purity-variance}
\end{equation}

We next bound the deterministic approximation error. Write the eigenvalues of
\(\rho\) as \(\lambda_i=(1+x_i)/d\). Then
\(\sum_i x_i=0\) and
\[
 \max_i\abs{x_i}\le dr\le\frac14.
\]
For \(\phi(x)=(1+x)\log(1+x)\), Taylor's theorem on
\([-1/4,1/4]\) gives
\[
 \abs{\phi(x)-x-x^2/2}
 \le C\abs{x}^3.
\]
Since \(\sum_i x_i^2=d^2\tr(H^2)\),
\begin{align*}
 \abs{S(\rho)-\log d+\tfrac d2\tr(H^2)}
 &\le
 \frac{C\max_i\abs{x_i}}d
 \sum_i x_i^2\\
 &\le C d^3r^3.
\end{align*}
The untruncated quantity in \eqref{eq:local-estimator} has expectation
\(\log d-\tfrac d2\tr(H^2)\). Combining its squared bias with
\eqref{eq:purity-variance}, and using \(n(n-1)\ge n^2/2\), gives
\[
 \E_\rho
 \left[
 \log d-\frac d2\left\{\widehat T_2^U-\frac1d\right\}
 -S(\rho)
 \right]^2
 \le C_{\varepsilon_0}\left(
 \frac{d^3r^2}{n}+\frac{d^4}{n^2}+d^6r^6
 \right).
\]
Projection onto \([0,\log d]\) cannot increase the squared distance from
\(S(\rho)\in[0,\log d]\). This proves the upper bound.
\end{proof}

\section{Proof of the local lower bound}
\label{app:local-lower}

The local mixture construction uses the following consequence of
concentration on the unitary group
\citep[Theorem~5.17]{Meckes2019}. If a centered real-valued function on
\(U(d)\) is \(L\)-Lipschitz in Frobenius distance, then
\begin{equation}
 \{\E\abs{f(W)}^k\}^{1/k}
 \le C L\sqrt{\frac{k}{d}},
 \qquad k\ge1,
\label{eq:unitary-moments}
\end{equation}
where \(C\) is universal. Indeed, for \(d\ge3\), the cited concentration
inequality applied to \(f\) and \(-f\) gives
\[
 \Pr\{\abs{f(W)}\ge t\}
 \le2\exp\left\{-\frac{(d-2)t^2}{24L^2}\right\}.
\]
The identity
\(\E\abs{f(W)}^k=k\int_0^\infty t^{k-1}
\Pr\{\abs{f(W)}\ge t\}\,dt\), followed by the standard Gamma-function
bound, yields
\[
 \{\E\abs{f(W)}^k\}^{1/k}
 \le C_0L\sqrt{\frac{k}{d-2}}.
\]
For \(d\ge4\), this is bounded by the right-hand side of
\eqref{eq:unitary-moments} after enlarging the universal constant. For
\(d=2,3\), the Frobenius diameter bound
\(\abs{f(W)}\le2L\sqrt d\) for a centered Lipschitz function gives the same
conclusion, with another enlargement of \(C\).

\begin{proof}[Proof of Theorem~\ref{thm:local}(ii)]
Let
\[
 R_d=
 \begin{cases}
 \operatorname{diag}(1^{d/2},-1^{d/2}),&d\ \text{even},\\
 \operatorname{diag}(1^{(d-1)/2},-1^{(d-1)/2},0),&d\ \text{odd}.
 \end{cases}
\]
Then \(\tr R_d=0\), \(\norm{R_d}_{\mathrm{op}}=1\), and
\(\tr(R_d^2)\in\{d-1,d\}\).

We first derive the radius-dependent term. For \(t\in[r/2,r]\), define
\[
 \rho_t=\frac Id+tR_d.
\]
These states belong to \(\Theta_d^{\mathrm{loc}}(r)\) and are positive
definite because \(dr\le1/4\). Write
\(g(z)=\tr(R_dP_z)\). The one-observation density and its derivative are
\[
 p_t(z)=1+dtg(z),
 \qquad
 \dot p_t(z)=dg(z).
\]
The frame identity and \(\tr R_d=0\) give
\[
 \int g(z)^2\,\nu(dz)
 =\frac1d\tr\{R_d\mathcal F_{\nu,0}(R_d)\}
 \le\frac{1+\varepsilon}{d(d+1)}\tr(R_d^2)
 \le\frac{1+\varepsilon}{d+1}.
\]
Since \(\abs{g(z)}\le1\), the Fisher information in one observation satisfies
\[
 I(t)
 =\int\frac{d^2g(z)^2}{1+dtg(z)}\,\nu(dz)
 \le\frac{4(1+\varepsilon)d^2}{3(d+1)}
 \le3d.
\]

Let \(\psi(t)=S(\rho_t)\), and define
\[
 h(\beta)=\frac12\{(1+\beta)\log(1+\beta)
 +(1-\beta)\log(1-\beta)\}.
\]
In even dimensions, \(\psi(t)=\log d-h(dt)\); in odd dimensions, the entropy
deficit is \((d-1)h(dt)/d\). The derivative is negative on this interval.
Since
\(h'(\beta)=\operatorname{arctanh}(\beta)\ge\beta\), the even-dimensional
bound is \(\abs{\psi'(t)}\ge d^2r/2\). In odd dimensions, the factor
\((d-1)/d\), together with \(d\ge2\), yields the following uniform bound for
\(t\in[r/2,r]\):
\[
 \abs{\psi'(t)}\ge\frac{d^2r}{4}.
\]
On this interval, use the prior density
\[
 \pi_r(t)=\frac4r
 \cos^2\left\{\frac{2\pi}{r}\left(t-\frac{3r}{4}\right)\right\}.
\]
It vanishes at both endpoints and has Fisher information
\[
 I(\pi_r)=\int_{r/2}^r\frac{\{\pi_r'(t)\}^2}{\pi_r(t)}\,dt
 =\frac{16\pi^2}{r^2}.
\]
The densities \(p_t\) are positive and differentiable, their scores have mean
zero, and the prior removes the boundary term. The one-dimensional van Trees
inequality therefore gives
\begin{align}
 R_{n,d}^{\mathrm{loc}}(r;\nu)
 &\ge
 \frac{d^4r^2/16}{3nd+16\pi^2/r^2}
 \ge c_1\frac{d^3r^2}{n},
\label{eq:local-direction-lower}
\end{align}
where the last step uses \(r\ge n^{-1/2}\) and \(d\ge2\).

We next derive the term associated with the degenerate quadratic component.
Let \(\eta>0\) be a universal constant, to be chosen sufficiently small, such
that
\[
 2\eta\le1,
\]
and define
\[
 a_0=\sqrt{\frac{\eta}{n}},
 \qquad
 a_1=\sqrt{\frac{2\eta}{n}},
 \qquad
 \rho_{j,V}=\frac Id+a_jVR_dV^*,
\]
where \(V\) is Haar distributed. The assumptions on \(r\) imply
\(n\ge16d^2\). Since \(a_1\le n^{-1/2}\le r\) and
\(da_1\le dr\le1/4\), both priors are supported on
\(\Theta_d^{\mathrm{loc}}(r)\), and every state in their support is positive
definite.

Write \(\beta_j=da_j\).
The entropy under prior \(j\) is deterministic. Its deficit from \(\log d\)
is \(h(\beta_j)\) for even \(d\), and
\((d-1)h(\beta_j)/d\) for odd \(d\). Since
\(h'(\beta)=\operatorname{arctanh}(\beta)\ge\beta\) on \([0,1)\),
\begin{equation}
 \Delta_S:=\abs{S(\rho_{1,V})-S(\rho_{0,V})}
 \ge\frac{\eta d^2}{4n}.
\label{eq:entropy-separation}
\end{equation}

Let \(Q_j\) be the \(n\)-sample mixture law under prior \(j\), and let
\[
 q_j=\frac{dQ_j}{d\nu^{\otimes n}}
\]
be its density relative to the common reference measure \(\nu^{\otimes n}\).
For two independent Haar rotations \(V,V'\), write
\[
 A_V=VR_dV^*,
 \qquad
 K_\nu(V,V')
 =d\tr\{A_V\mathcal F_{\nu,0}(A_{V'})\}.
\]
Lemma~\ref{lem:design-identities} gives
\[
 \int q_jq_\ell\,d\nu^{\otimes n}
 =
 \E_{V,V'}\{1+a_ja_\ell K_\nu(V,V')\}^n.
\]
This kernel need not be a function of \(V^*V'\), because an approximate
frame need not be unitarily covariant. Nevertheless, it has the required
dimension-free moments. Conditional on \(V'\), it is centered as a function
of \(V\), since \(\E_V A_V=0\). Moreover, the frame bound gives
\[
 \norm{d\mathcal F_{\nu,0}(A_{V'})}_F
 \le\frac{d(1+\varepsilon)}{d+1}\norm{R_d}_F
 \le(1+\varepsilon)\sqrt d.
\]
Thus \(V\mapsto K_\nu(V,V')\) is at most
\(2(1+\varepsilon)\sqrt d\)-Lipschitz in Frobenius distance. Hence
\eqref{eq:unitary-moments}, followed by averaging over \(V'\), implies
\[
 \E\abs{K_\nu(V,V')}^k\le(C\sqrt k)^k
\]
uniformly in \(d\) and \(0\le\varepsilon<1\). In particular,
\(\E K_\nu=0\). Expanding the three pairwise
\(L^2(\nu^{\otimes n})\) inner products around the common reference measure
yields
\begin{align*}
 \int(q_1-q_0)^2\,d\nu^{\otimes n}
 &=
 \sum_{k=2}^n\binom nk
 \E\{K_\nu(V,V')^k\}(a_1^k-a_0^k)^2\\
 &\le
 \sum_{k=2}^n\binom nk
 \E\abs{K_\nu(V,V')}^k(a_1^k-a_0^k)^2\\
 &\le
 \sum_{k=2}^\infty
 \frac{n^k}{k!}(C\sqrt k)^k
 \left(\frac{2\eta}{n}\right)^k.
\end{align*}
The sum starts at \(k=2\) because \(\E K_\nu=0\); for odd \(k\), the
second line explicitly replaces the possibly signed moment by its absolute
moment.
The inequality \(k!\ge(k/e)^k\) bounds the last series by
\[
 \sum_{k=2}^\infty
 (2Ce\eta)^k k^{-k/2}
 \le C_1\eta^2
\]
for all sufficiently small universal \(\eta\). Fix \(\eta\) so that
\(C_1\eta^2\le1/4\). Then
\[
 \TV(Q_0,Q_1)
 \le\frac12
 \left\{\int(q_1-q_0)^2\,d\nu^{\otimes n}\right\}^{1/2}
 \le\frac14.
\]

Associate each value of an estimator with the nearer of the two entropy
values. An incorrect decision entails squared error at least
\(\Delta_S^2/4\),
while the sum of the two testing errors is at least
\(1-\TV(Q_0,Q_1)\). Averaging the Bayes risks under the two priors gives
\begin{equation}
 R_{n,d}^{\mathrm{loc}}(r;\nu)
 \ge\frac{\Delta_S^2}{8}\{1-\TV(Q_0,Q_1)\}
 \ge c_2\frac{d^4}{n^2},
\label{eq:local-mixture-lower}
\end{equation}
where \eqref{eq:entropy-separation} is used in the last step.
Combining \eqref{eq:local-direction-lower} and
\eqref{eq:local-mixture-lower}, and using
\(\max(x,y)\ge(x+y)/2\), proves part (ii).
\end{proof}

\section{Proof of the global lower bound}
\label{app:global-proof}

Let \(\nu_{\mathrm H}\) denote Haar probability measure on rank-one
projectors. We first bound the Fisher information in the reference submodel
\eqref{eq:global-submodel}.

\begin{lemma}
\label{lem:haar-fisher}
There is a universal \(d_0\) such that, for \(d\ge d_0\),
\(0<\mu\le1/9\), and \(\abs{t}\le\mu/2\), the Fisher information of one
Haar covariant observation satisfies
\[
 I_{\mathrm H}(t)\le\frac8d.
\]
\end{lemma}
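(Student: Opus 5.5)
The plan is to reduce the Haar Fisher information to an explicit Beta integral and bound it with one exact Beta identity, so that no inverse-probability approximation is needed.

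\emph{Step 1: reduce to a Beta integral (even \(d\)).} Write \(P_u=uu^*\) with \(u\) Haar distributed on the unit sphere, and set \(X=\norm{\Pi_s u}^2\). Then \(X\sim\mathrm{Beta}(d/2,d/2)\). By \eqref{eq:density} and \eqref{eq:global-submodel}, with \(a=\mu+t\) and \(b=1-\mu-t\), the density and its derivative are
\[
 p_t=2\{aX+b(1-X)\},
 \qquad
 \dot p_t=2(2X-1).
\]
Both depend on \(u\) only through \(X\), so
\[
 I_{\mathrm H}(t)=\E\frac{2(2X-1)^2}{aX+b(1-X)}.
\]
Since \(\abs t\le\mu/2\) and \(\mu\le1/9\), we have \(a\ge\mu/2>0\) and \(b\ge1-3\mu/2\ge5/6\). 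Hence \(aX+b(1-X)\ge b(1-X)\), and
\[
 I_{\mathrm H}(t)\le\frac2b\,\E\frac{(2X-1)^2}{1-X}.
\]
Dropping \(aX\) is what makes the bound uniform in small \(\mu\). Everything then rests on the \(X\) near \(1/2\) regime, which carries almost all of the mass.

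\emph{Step 2: evaluate the expectation exactly.} For \(X\sim\mathrm{Beta}(\alpha,\alpha)\) with \(\alpha=d/2\), the Beta-function identity gives
\[
 \E\frac{f(X)}{1-X}=\frac{2\alpha-1}{\alpha-1}\,\E f(Y),
 \qquad Y\sim\mathrm{Beta}(\alpha,\alpha-1).
\]
Take \(f(x)=(2x-1)^2\). The moments of \(Y\) show that \(\E Y=\tfrac12+O(1/d)\) and \(\Var Y=\{4(d+1)\}^{-1}(1+O(1/d))\). Therefore \(\E(2Y-1)^2=d^{-1}(1+O(1/d))\), while the prefactor \((d-1)/(d/2-1)=2+O(1/d)\). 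Combining,
\[
 I_{\mathrm H}(t)\le\frac{4}{bd}\{1+O(1/d)\}\le\frac{24}{5d}\{1+O(1/d)\}.
\]
This is at most \(8/d\) once \(d\ge d_0\).

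\emph{Step 3: odd \(d\).} Let the two blocks have rank \(m=(d-1)/2\) and let the extra one-dimensional block carry eigenvalue \(1/d\). Then \((X,Y,W)\), the squared projections onto the three blocks, is \(\mathrm{Dirichlet}(m,m,1/2)\)-distributed. The density becomes
\[
 p_t=\frac{2d}{d-1}\{aX+bY\}+W
\]
(with the block weights normalized accordingly), and the score numerator is proportional to \(X-Y\). Bound the denominator below by a constant times \(Y\). Then apply the analogous Dirichlet identity, which lowers the second parameter by one, and compute \(\E(X-Y)^2\approx1/d\) as before.

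\emph{Main obstacle.} The only delicate point is keeping the constant below \(8\) and uniform in \(\mu\). This means the leading term \(4/(bd)\) must be computed exactly, including the factor \(2\) from the Beta shift, with only the \(O(1/d)\) corrections absorbed into \(d_0\). The odd case requires the same bookkeeping, because the normalizing factor \(d/(d-1)\) and the extra block perturb the constant only at order \(1/d\).
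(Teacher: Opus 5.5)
Your argument is correct. It differs from the paper's at the one step that matters.

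\emph{Your route.} You discard the small-block mass, \(p_t\ge 2b(1-X)\), and then apply the Beta shift \(\E\{f(X)/(1-X)\}=\{(2\alpha-1)/(\alpha-1)\}\E f(Y)\). No asymptotics are needed: \(\E(2Y-1)^2=1/(d-1)\) exactly, so \(\E\{(2X-1)^2/(1-X)\}=2/(d-2)\). Hence \(I_{\mathrm H}(t)\le 4/\{b(d-2)\}\le 24/\{5(d-2)\}\le 8/d\) for \(d\ge5\).

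\emph{The paper's route.} It writes \(p_t=1+aX\) with \(X=2B-1\) symmetric and \(a=1-2(\mu+t)\in[0,1]\). It averages over \(X\mapsto -X\) to get \(\E X^2/(1-a^2X^2)\le\E X^2/(1-X^2)\), and evaluates the right side exactly through \(\E\{B(1-B)\}^{-1}\), obtaining \(4/(d-2)\).

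\emph{What each buys.} The symmetrization needs only \(0\le a\le 1\), so it loses no factor \(1/b\). It also makes the odd case almost free: conditioning on the neutral-block mass \(C\), which is independent of \(X\), reduces it to the same Beta integral. Your route is more direct, since it simply drops the vanishing eigenvalue mass. The cost is a separate Dirichlet shift and second-moment computation in the odd case, and a factor \(6/5\) that is harmless given the slack below \(8\).

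Your odd-case sketch has two slips you should fix.
\begin{itemize}
\item For a complex Haar vector, the squared projections onto blocks of ranks \(m,m,1\) are \(\mathrm{Dirichlet}(m,m,1)\), not \((m,m,1/2)\). This matches your \(\operatorname{Beta}(d/2,d/2)\) in the even case.
\item With eigenvalues \(2a/d\), \(2b/d\), \(1/d\), the density is \(p_t=2(aX+bY)+W\). Your prefactor \(2d/(d-1)\) gives \(\E p_t=1+1/d\), which corresponds to a matrix of trace \(1+1/d\).
\end{itemize}
With these corrected, \(p_t\ge 2bY\) and \(\dot p_t=2(X-Y)\). The shift gives \(\E\{(X-Y)^2/Y\}=\{2(d-1)/(d-3)\}\E'(X-Y)^2\), with \(\E'\) taken under \(\mathrm{Dirichlet}(m,m-1,1)\), and \(\E'(X-Y)^2=1/d\) exactly. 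Hence \(I_{\mathrm H}(t)\le 4(d-1)/\{bd(d-3)\}\le 8/d\) for \(d\ge6\). Both slips only perturb \(O(1/d)\) terms, so your conclusion stands.
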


\begin{proof}
Suppose first that \(d=2m\). For a Haar unit vector \(u\), define
\[
 B=\langle u,\Pi_bu\rangle,\qquad X=2B-1,\qquad
 a=1-2(\mu+t).
\]
Then \(B\sim\operatorname{Beta}(m,m)\), \(X\) is symmetric about zero, and
\(a\in[0,1]\). Since
\(\dot\rho_t=2(\Pi_s-\Pi_b)/d\), direct substitution into
\[
 I_{\mathrm H}(t)
 =
 d\,\E_u
 \frac{\langle u,\dot\rho_tu\rangle^2}
 {\langle u,\rho_tu\rangle}
\]
gives
\[
 I_{\mathrm H}(t)
 =
 4\E\frac{X^2}{1+aX}
 =
 4\E\frac{X^2}{1-a^2X^2}
 \le
 4\E\frac{X^2}{1-X^2}.
\]
The Beta integral satisfies
\[
 \E\frac1{B(1-B)}
 =
 2\,\frac{2m-1}{m-1}.
\]
Since \(X^2/(1-X^2)=1/\{4B(1-B)\}-1\), it follows that
\[
 \E\frac{X^2}{1-X^2}
 =
 \frac1{2(m-1)}
 =
 \frac1{d-2}.
\]
Thus \(I_{\mathrm H}(t)\le4/(d-2)\).

For \(d=2m+1\), let \(\Pi_s\) and \(\Pi_b\) have rank \(m\), let \(\Pi_0\)
have rank one, and set
\[
 \rho_t=
 \frac{2(\mu+t)}d\Pi_s+
 \frac{2(1-\mu-t)}d\Pi_b+
 \frac1d\Pi_0.
\]
Let \(C=\langle u,\Pi_0u\rangle\), and write the relative mass in the two
rank-\(m\) blocks as
\[
 \frac{\langle u,\Pi_bu\rangle}{1-C}=\frac{1+X}{2}.
\]
The Dirichlet decomposition implies that \(C\) and \(X\) are independent,
and \(X\) is again induced by a \(\operatorname{Beta}(m,m)\) variable.
Direct calculation gives
\[
 I_{\mathrm H}(t)
 =
 4\E
 \frac{(1-C)^2X^2}{1+(1-C)aX}.
\]
Condition on \(C\) and put \(y=1-C\). Since \(X\) is symmetric and
independent of \(C\),
\begin{align*}
 \E_X\frac{y^2X^2}{1+yaX}
 &=\frac12\E_X\left\{
 \frac{y^2X^2}{1+yaX}+\frac{y^2X^2}{1-yaX}
 \right\}\\
 &=\E_X\frac{y^2X^2}{1-y^2a^2X^2}
 \le \E_X\frac{X^2}{1-X^2},
\end{align*}
where the last inequality uses \(0\le y,a\le1\). Integrating over \(C\) and
using the preceding Beta calculation with \(d-1=2m\) gives
\[
 I_{\mathrm H}(t)\le4\E\frac{X^2}{1-X^2}=\frac4{d-3}.
\]
The two parity cases imply \(I_{\mathrm H}(t)\le8/d\) for all sufficiently
large \(d\).
\end{proof}

\begin{lemma}[Rotation averaging]
\label{lem:rotation-average}
Let \(\nu\) define a rank-one POVM as in Section~\ref{sec:model}, and let
\(\bar\rho_t\) denote either parity version of the reference submodel. For
\(V\in\mathsf U(d)\), define
\[
 \rho_{t,V}=V\bar\rho_tV^*,
 \qquad
 p_{t,V}(z)=d\,\tr(\rho_{t,V}P_z),
\]
and let \(I_{\nu,V}(t)\) be the Fisher information in \(t\) from one
observation. If \(\omega\) is Haar probability measure on
\(\mathsf U(d)\), then
\begin{equation}
 \int_{\mathsf U(d)}I_{\nu,V}(t)\,\omega(dV)
 =
 I_{\mathrm H}(t),
\label{eq:rotation-fisher}
\end{equation}
where \(I_{\mathrm H}(t)\) is the Haar Fisher information in
Lemma~\ref{lem:haar-fisher}. In particular, for every prior density \(\pi\)
on the parameter interval, there exists a deterministic \(V_0\) such that
\[
 \E_\pi I_{\nu,V_0}(t)
 \le
 \E_\pi I_{\mathrm H}(t).
\]
\end{lemma}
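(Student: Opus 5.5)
The plan is to write the Fisher information of the rotated model as an integral over outcomes, exchange the Haar average over \(V\) with the integral over \(z\), and observe that for each fixed outcome the averaged integrand equals the Haar Fisher information. For a fixed \(V\), the score is \(\dot p_{t,V}/p_{t,V}\) with \(\dot p_{t,V}(z)=d\,\tr(V\dot{\bar\rho}_tV^*P_z)\). Writing \(P_z=|\phi_z\rangle\langle\phi_z|\), this gives
\[
 I_{\nu,V}(t)=\int d\,\frac{\langle V^*\phi_z,\dot{\bar\rho}_tV^*\phi_z\rangle^2}{\langle V^*\phi_z,\bar\rho_tV^*\phi_z\rangle}\,\nu(dz).
\]
The integrand is nonnegative, since \(\bar\rho_t\) is positive definite for \(|t|\le\mu/2\) with \(\mu\le1/9\). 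Tonelli's theorem therefore lets me integrate over \(\omega(dV)\) first.

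For each fixed \(z\), the vector \(V^*\phi_z\) is a Haar-distributed unit vector when \(V\sim\omega\), whatever \(\phi_z\) is. Hence
\[
 \int d\,\frac{\langle V^*\phi_z,\dot{\bar\rho}_tV^*\phi_z\rangle^2}{\langle V^*\phi_z,\bar\rho_tV^*\phi_z\rangle}\,\omega(dV)=d\,\E_u\frac{\langle u,\dot{\bar\rho}_tu\rangle^2}{\langle u,\bar\rho_tu\rangle}=I_{\mathrm H}(t),
\]
which is exactly the expression used in Lemma~\ref{lem:haar-fisher}. This value does not depend on \(z\), and \(\nu\) is a probability measure, so integrating over \(\nu\) yields \eqref{eq:rotation-fisher}. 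The normalization \(\int P_z\,\nu(dz)=I/d\) enters only in making \(p_{t,V}\) a density. Each \(P_z\) has trace one, so no reweighting by \(\tr P_z\) is needed.

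For the second claim, integrate \eqref{eq:rotation-fisher} against \(\pi(t)\,dt\) and apply Tonelli again to obtain
\[
 \int\E_\pi I_{\nu,V}(t)\,\omega(dV)=\E_\pi I_{\mathrm H}(t).
\]
Lemma~\ref{lem:haar-fisher} makes the right-hand side finite. A nonnegative function whose \(\omega\)-average equals a finite number is at most that number on a set of positive measure, so some \(V_0\) satisfies \(\E_\pi I_{\nu,V_0}(t)\le\E_\pi I_{\mathrm H}(t)\). For this existence argument I also need the map \(V\mapsto\E_\pi I_{\nu,V}\) to be measurable. I expect to get this from joint measurability of the integrand in \((V,z,t)\), which holds because the integrand is a continuous function of \(V\) and \(t\) composed with the measurable map \(z\mapsto P_z\).

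The main point requiring care is the regularity that justifies calling \(I_{\nu,V}\) the Fisher information. This requires differentiating under the integral and ensuring the score has mean zero. Both follow because \(p_{t,V}\) is affine in \(t\) and bounded below by \(d\lambda_{\min}(\bar\rho_t)>0\) uniformly on the parameter interval. The odd-dimensional model is handled identically, since its extra block \(\Pi_0/d\) does not depend on \(t\).
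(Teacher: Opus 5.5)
Your proposal is correct and follows the paper's proof. You write \(I_{\nu,V}(t)\) as an outcome integral and note that \(V^*v_z\) is a Haar unit vector for each fixed \(z\), so the \(V\)-average of the integrand equals \(I_{\mathrm H}(t)\); Tonelli then exchanges the integrals, and after integrating against \(\pi\) you pick \(V_0\) by averaging. Your remarks on measurability of \(V\mapsto\E_\pi I_{\nu,V}(t)\) and on the regularity of the score spell out points the paper leaves implicit.
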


\begin{proof}
Since all states in the parameter interval are full rank,
\[
 I_{\nu,V}(t)
 =
 d\int
 \frac{\{\tr(V\dot{\bar\rho}_tV^*P_z)\}^2}
 {\tr(V\bar\rho_tV^*P_z)}
 \,\nu(dz).
\]
Fix \(z\) and write \(P_z=|v_z\rangle\langle v_z|\). If
\(V\sim\omega\), then \(u=V^*v_z\) is a Haar unit vector. The integral of
the preceding integrand over \(V\) is therefore
\[
 d\,\E_u
 \frac{\langle u,\dot{\bar\rho}_tu\rangle^2}
 {\langle u,\bar\rho_tu\rangle}
 =
 I_{\mathrm H}(t).
\]
The integrand is nonnegative, so Tonelli's theorem permits integration first
over \(V\) and then over \(z\). Since \(\nu\) is a probability measure,
\eqref{eq:rotation-fisher} follows. After integrating this identity against
\(\pi(t)\), the averaging argument gives an orientation \(V_0\) whose
prior-averaged Fisher information is no larger than the Haar average.
\end{proof}

\begin{lemma}
\label{lem:entropy-derivative}
For the even-dimensional submodel \eqref{eq:global-submodel}, let
\(\psi(t)=S(\rho_t)\). If \(0<\mu\le1/9\) and
\(\abs{t}\le\mu/2\), then
\[
 \psi'(t)
 =
 \log\frac{1-\mu-t}{\mu+t}
 \ge\frac12\log\frac1\mu.
\]
For the odd-dimensional submodel in the proof of
Lemma~\ref{lem:haar-fisher}, the derivative is multiplied by
\((d-1)/d\).
\end{lemma}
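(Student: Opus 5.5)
The plan is a direct spectral computation. The state \(\rho_t\) in \eqref{eq:global-submodel} is diagonal in the block decomposition. It has \(d/2\) eigenvalues equal to \(\lambda_s=2(\mu+t)/d\) and \(d/2\) eigenvalues equal to \(\lambda_b=2(1-\mu-t)/d\). Hence
\[
 \psi(t)=-\frac d2\lambda_s\log\lambda_s-\frac d2\lambda_b\log\lambda_b .
\]
I will differentiate in \(t\), using \(d\lambda_s/dt=2/d\) and \(d\lambda_b/dt=-2/d\), together with \((x\log x)'=\log x+1\). This gives
\[
 \psi'(t)=-(\log\lambda_s+1)+(\log\lambda_b+1)=\log\frac{\lambda_b}{\lambda_s}=\log\frac{1-\mu-t}{\mu+t}.
\]
The additive constants cancel because the total mass is conserved. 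The factors \(2/d\) cancel exactly because each block has rank \(d/2\).

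For the lower bound, note that \(|t|\le\mu/2\) implies \(\mu+t\le 3\mu/2\) and \(1-\mu-t\ge 1-3\mu/2\). Therefore
\[
 \psi'(t)\ge\log\frac{1-3\mu/2}{3\mu/2}=\log\frac1\mu-\log\frac{3}{2-3\mu}.
\]
It then suffices to check that \(\log\{3/(2-3\mu)\}\le\frac12\log(1/\mu)\), that is, \(\{3/(2-3\mu)\}^2\le 1/\mu\), for \(0<\mu\le 1/9\). The left side is increasing in \(\mu\) and the right side is decreasing, so the worst case is the endpoint \(\mu=1/9\). There the left side is \((3/(5/3))^2=81/25<9\), which settles the inequality.

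For odd \(d=2m+1\), the blocks \(\Pi_s\) and \(\Pi_b\) have rank \(m=(d-1)/2\), and the rank-one block has the fixed eigenvalue \(1/d\). The fixed eigenvalue contributes nothing to the derivative. The two varying blocks again contribute \(\log(\lambda_b/\lambda_s)\) times the factor (rank)\(\cdot(2/d)\), which now equals \((d-1)/d\) instead of \(1\). One consistency point needs attention in the odd case. The trace of \(\rho_t\) is then \(1-1/d+1/d=1\) only if the block masses are read as written in the proof of Lemma~\ref{lem:haar-fisher}. The log-ratio \(\log\{(1-\mu-t)/(\mu+t)\}\) is unchanged under that reading, so the claimed multiplication by \((d-1)/d\) follows.

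There is no real obstacle here. The only care needed is in the numerical check of the constant \(\frac12\) at \(\mu=1/9\), and in confirming that the odd-parity factor is exactly \((d-1)/d\).
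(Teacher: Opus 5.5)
Your proposal is correct and matches the paper's proof: the same differentiation giving \(\log(\lambda_b/\lambda_s)\), a lower bound at the worst endpoint \(t=\mu/2\), and the rank factor \(2m/d=(d-1)/d\) in odd dimensions. The only difference is cosmetic. The paper first bounds the ratio below by \(1/(3\mu)\) and then uses \(\log\{1/(3\mu)\}\ge\tfrac12\log(1/\mu)\), which holds exactly when \(\mu\le1/9\); you instead check the slightly sharper endpoint inequality directly. Your odd-case consistency caveat is unnecessary, since the trace is simply \(2m/d+1/d=1\).
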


\begin{proof}
In even dimensions, differentiation gives
\[
 \psi'(t)
 =
 \log\frac{\lambda_b(t)}{\lambda_s(t)}
 =
 \log\frac{1-\mu-t}{\mu+t}.
\]
On the stated interval, the last ratio is at least \(1/(3\mu)\), and
\(\log\{1/(3\mu)\}\ge\tfrac12\log(1/\mu)\) for \(\mu\le1/9\).
The odd-dimensional formula follows because both nonneutral blocks have
rank \((d-1)/2\).
\end{proof}

\begin{proof}[Proof of Theorem~\ref{thm:global}]
Fix the measurement \(\nu\), and let \(\bar\rho_t\) denote the appropriate
parity version of the reference submodel. Set
\[
 \mu=2\sqrt{\frac dn}
\]
and take \(C\) in the theorem sufficiently large that
\(\mu\le1/9\) and Lemma~\ref{lem:haar-fisher} applies. On
\([-\mu/2,\mu/2]\), use the prior density
\[
 \pi_\mu(t)=\frac2\mu\cos^2\left(\frac{\pi t}{\mu}\right).
\]
It vanishes at the endpoints and has Fisher information
\[
 I(\pi_\mu)
 =
 \int\frac{\{\pi_\mu'(t)\}^2}{\pi_\mu(t)}\,dt
 =
 \frac{4\pi^2}{\mu^2}
 =
  \frac{\pi^2n}{d}.
 \]

By Lemmas~\ref{lem:haar-fisher} and~\ref{lem:rotation-average}, there exists a
deterministic orientation \(V_0\) such that
\[
 \E_{\pi_\mu}I_{\nu,V_0}(t)\le\frac8d.
\]
Fix this orientation and set \(\rho_t=V_0\bar\rho_tV_0^*\). For every \(t\)
in the support of \(\pi_\mu\), the state \(\rho_t\) is full rank. The density
\(p_t(z)=d\tr(\rho_tP_z)\) is strictly positive and continuously
differentiable in \(t\). Its score has mean zero because
\[
 \int \partial_t p_t(z)\,\nu(dz)
 =
 d\,\tr\left(
 V_0\dot{\bar\rho}_tV_0^*
 \int P_z\,\nu(dz)
 \right)
 =
 \tr(\dot{\bar\rho}_t)
 =
 0.
\]
The prior \(\pi_\mu\) is absolutely continuous and vanishes at both endpoints,
which justifies integration by parts in the van Trees inequality and removes
the boundary term.

Let \(\psi(t)=S(\rho_t)=S(\bar\rho_t)\), where the equality follows from
unitary invariance. The one-dimensional van Trees inequality gives, for every
estimator
 \(\widehat S\),
\[
 \sup_{\rho\in\Theta_d}
 \E_\rho\{\widehat S-S(\rho)\}^2
 \ge
 \frac{\{\E_{\pi_\mu}\psi'(t)\}^2}
 {n\E_{\pi_\mu}I_{\nu,V_0}(t)+I(\pi_\mu)}.
\]
The denominator is at most \((8+\pi^2)n/d\). By
Lemma~\ref{lem:entropy-derivative}, the squared numerator is bounded below by
a universal constant times \(\log^2(1/\mu)\); in odd dimensions, the
additional factor \((d-1)/d\) is bounded away from zero. Since
\[
 \log\frac1\mu
 =
 \frac12\log\left(\frac n{4d}\right),
\]
the claimed bound follows after taking the infimum over
\(\widehat S\in\mathcal E_n\).
\end{proof}

\end{appendix}

\bibliographystyle{imsart-nameyear}
\bibliography{refs}

\end{document}